\documentclass[11pt,a4paper]{article}

\usepackage[T1]{fontenc}
\usepackage[utf8]{inputenc}
\usepackage[english]{babel}
\usepackage[margin=1in]{geometry}
\usepackage{authblk}
\usepackage{graphicx}
\usepackage{amsmath,amssymb,amsfonts}
\usepackage{mathtools}
\usepackage{bm}
\usepackage{booktabs}
\usepackage{float}
\usepackage{capt-of}
\usepackage{placeins}
\usepackage{subcaption}
\usepackage{algorithm}
\usepackage{algpseudocode}
\usepackage{amsthm}
\usepackage{microtype}
\usepackage{xspace}
\usepackage{etoolbox}
\usepackage{xcolor}
\usepackage[authoryear,longnamesfirst]{natbib}
\usepackage[hidelinks]{hyperref}

\def\tsc#1{\csdef{#1}{\textsc{\lowercase{#1}}\xspace}}
\tsc{WGM}
\tsc{QE}

\newtheorem{theorem}{Theorem}

\newtheorem{lemma}[theorem]{Lemma}

\theoremstyle{definition}
\newtheorem{definition}{Definition}
\newcommand{\RR}{\mathbb{R}}
\newcommand{\EE}{\mathbb{E}}
\newcommand{\1}{\mathbb{I}}

\DeclareMathOperator{\rhoSpec}{\rho}

\definecolor{colNOS}{RGB}{31,119,180}
\definecolor{coltGNN}{RGB}{255,127,14}
\definecolor{colGRU}{RGB}{44,160,44}
\definecolor{colRNN}{RGB}{214,39,40}
\definecolor{colMLP}{RGB}{148,103,189}

\makeatletter
\g@addto@macro\normalsize{%
  \setlength{\abovedisplayskip}{1pt}%
  \setlength{\belowdisplayskip}{1pt}%
  \setlength{\abovedisplayshortskip}{0pt}%
  \setlength{\belowdisplayshortskip}{0pt}%
}
\makeatother

\title{Certified Closed-Loop Control for Packet Networks: A Compositional Certification Framework}

\author[1]{Muhammad Bilal\thanks{Corresponding author: Muhammad Bilal. Email: \texttt{m.bilal@ieee.org}}}
\author[2]{Jon Crowcroft}
\author[3]{Xiaolong Xu}
\author[4]{Huaming Wu}

\affil[1]{School of Computing and Communications, Lancaster University, United Kingdom}
\affil[2]{Department of Computer Science and Technology, University of Cambridge, Cambridge, United Kingdom\\\texttt{jon.crowcroft@cl.cam.ac.uk}}
\affil[3]{School of Computer Science, School of Software, Nanjing University of Information Science and Technology, China\\\texttt{xlxu@ieee.org}}
\affil[4]{Center for Applied Mathematics, Tianjin University, China\\\texttt{whming@tju.edu.cn}}

\date{}

\begin{document}

\maketitle

\begin{abstract}
Packet networks are controlled dynamical systems with discontinuities, delayed observations, and partial state information. Adaptive or learning-driven proposers can improve performance, but an unsafe proposal may still cause starvation, tail-delay spikes, or unstable queue behaviour. This paper treats packet-network control as an executed-action certification problem. A certified operator sits between any proposer and the dataplane. At each control tick, the proposer emits an arbitrary candidate action $\tilde u(t)$. The operator either projects it to an executable action $u(t)$ that satisfies a configuration-compiled certificate, or reports \texttt{INFEASIBLE} and executes an always-defined fallback with quantified slack. The certificate also exports an auditable envelope $\bar z(t)$ for downstream composition.
The guarantees are conditional and explicit. They apply on ticks where the operator reports \texttt{CERTIFIED}, the declared arrival envelope and backlog bound are valid, and the platform realises the assumed service lower bound. Under these conditions, one mechanism covers backlog caps, service floors, mitigation caps, Foster--Lyapunov drift constraints, and compositional envelope contracts. We prove operator-level safety, feed-forward compositional safety and stability using exported envelopes, and a cyclic closure result under a small-gain condition. We also define breach and infeasibility semantics, discuss calibration of the service-tracking factor that links certified targets to realised scheduler behaviour, and evaluate the design under delayed telemetry, delayed actuation, weak proposers, envelope mismatch, overload, and millisecond-scale certification. The present evaluation validates the certified execution boundary in a byte-level closed-loop backend; deployment-level scheduler tracking is left to future Linux or hardware experiments.
\end{abstract}

\paragraph{Highlights}
\begin{itemize}
\item Frames packet-network control as certification of the action actually executed at the dataplane.
\item Separates arbitrary proposers from the certified operator, status flag, slack report, and exported envelope contract.
\item Provides per-tick constraints for backlog caps, service floors, mitigation caps, and drift-based queue stability.
\item Gives compositional safety and stability results for feed-forward networks, plus cyclic closure diagnostics under a small-gain condition.
\item Makes the limits of certification explicit through valid-envelope, valid-state-bound, service-tracking, breach, and infeasibility semantics.
\end{itemize}

\paragraph{Keywords}
Certified closed-loop network control; Packet networks; Safety filtering; Runtime assurance; Compositional certification; Queue stability

\section{Introduction}
% =========================

Packet networks are controlled dynamical systems in which scheduling and queueing decide who is served and when, queue management decides what is marked or dropped, routing determines where work flows, and mitigation actions shape or isolate traffic. These mechanisms connect continuous objectives, such as tail delay, fairness, and delivered rate, to discontinuous packet-level dynamics, including integer service, idling, and reflection at zero backlog. Classical mechanisms already show that small control choices can have large effects on delay and stability, including fair queueing, deficit round robin, active queue management, and service abstractions for per-flow guarantees \cite{demers1989fairqueue,shreedhar1995drr,floyd1993red,parekh1993gps1,parekh1994gps2}.

Queue-drift analysis has long been used to reason about packet-network stability \cite{tassiulas1992stability,neely2010stochastic}. In this view, the relevant object is the service action executed by the system, rather than the controller’s intended action. That distinction matters in modern packet systems. Data-centre and edge networks often operate at high utilisation, with tail-latency targets and feedback drawn from sampled counters, aggregate telemetry, or in-band summaries rather than direct queue state \cite{alizadeh2010dctcp,perry2014fastpass,li2019hpcc,benbasat2020pint,shriv2022,jeyakumar2024umon}. Programmable monitoring improves visibility, but it still leaves the controller making some decisions before the full state is known.

Network control is also becoming more adaptive. Recent work includes synthesis-style control, learning from expert behaviour, and controllers trained across varied network conditions \cite{winstein2013tcpex,emara2020eagle,zhou2022deepcc,xia2022genet,bilal2025nosoran}. The risk is not only poor average performance. Model error, traffic shift, or timed adversarial behaviour can produce starvation, delay spikes, or unstable control actions. Delayed telemetry and delayed actuation make this worse because the controller may recognise the failure only after queues have already grown. Even under simple admissibility rules, worst-case arrivals can increase delay and backlog in ways that average-case reasoning does not predict \cite{borodin2001aqt}.

This paper asks a deployment-facing question: how can adaptive proposing be combined with strict, online-checkable guarantees for packet queues, in a form that composes across a network?
% We separate proposing from correctness by placing a \emph{certified operator} between any proposer and the dataplane. At each control tick, the proposer emits a candidate action $\tilde u(t)$ based on delayed telemetry; in agentic NetOps settings, this proposer may also be a software agent or LLM-based assistant \cite{bilal2026netOps}. The operator mediates execution by enforcing constraints compiled from a configuration object. It returns an executed action $u(t)$, an exported envelope $\bar z(t)$ for downstream composition, and a status flag recording whether certification succeeded.
We separate proposal from correctness by placing a \emph{certified operator} between the proposer and the dataplane. At each control tick, the proposer emits a candidate action \(\tilde u(t)\) from delayed telemetry. In agentic NetOps settings, it may also be a software agent or LLM-based assistant \cite{bilal2026netOps}. The operator then enforces constraints compiled from configuration and returns the executed action \(u(t)\), an exported envelope \(\bar z(t)\) for downstream composition, and a status flag indicating whether certification succeeded.

The guarantees attach to the executed action $u(t)$, not to the internal logic of the proposer.
This ``execute through a filter'' pattern is close to safety filtering in control, where an arbitrary controller is wrapped by an online feasibility check and a correction step \cite{ames2017cbfqp,wabersich2021psf}.
For composition, we use an envelope-contract view.
Modules exchange certified bounds, rather than internal policies.
This is adjacent to deterministic envelope reasoning in network calculus, with the difference that the bounds here are enforced at runtime through the operator interface \cite{cruz1991calculus1,cruz1991calculus2}.

The paper makes four contributions.
First, it formalises a delayed closed-loop model for packet queues and defines a certified-operator interface that compiles a unified, configuration-driven certificate into per-tick constraints.
Second, it supports backlog-cap and floor/cap safety constraints, drift-based stability guarantees with explicit constants, and compositional contracts through exported envelopes.
Third, it proves operator-level safety, compositional safety, and stability for feed-forward DAG networks using only exported envelopes.
It also gives a cyclic-network closure result based on an envelope fixed point under a small-gain condition.
Fourth, it specifies contract-breach semantics and an always-defined emergency fallback for infeasible constraints.
The evaluation covers timing-shaped stress, envelope mismatch, overload, delayed telemetry and actuation, and pinned control-tick runtime microbenchmarks \cite{bilal2026gate}.

The envelope layer is designed to work with classical network calculus, not to replace it. Declared arrival envelopes $\bar A(t)$ and exported envelopes $\bar z(t)$ may be instantiated from arrival-curve and service-curve assumptions, with token-bucket contracts as a standard case \cite{leboudec2001network,cruz1991calculus1}. Classical network calculus uses such assumptions to derive deterministic backlog and delay bounds for queueing elements and their composition. Here the emphasis is different: envelopes are used inside an online certified operator that filters arbitrary proposed actions at each control tick. The paper proves operator-level and compositional properties for the abstract queueing model and for the compiled per-tick constraints. It does not claim mechanised verification of the certificate compiler, solver, runtime, or dataplane backend. Strengthening that trusted computing base through translation validation or machine-checked implementation proofs is a natural next step, but outside the scope of this paper \cite{sewell2013translation,klein2014comprehensive}.

\paragraph{Positioning and scope.}

Queueing stability and drift-based control provide the mathematical backbone for many packet-network controllers. Max-weight style arguments and Foster--Lyapunov drift give conditions under which queues remain stable under admissible loads \cite{tassiulas1992stability,neely2010stochastic}. The present paper uses the same discipline, but moves the proof obligation to the action that is executed after online certification. The proposer is not trusted to preserve drift by itself.

Network calculus and service-curve methods give deterministic backlog and delay bounds from arrival and service envelopes \cite{cruz1991calculus1,cruz1991calculus2,leboudec2001network}. Our envelope contracts are deliberately compatible with that tradition. The difference is operational: the envelope is not only an offline assumption used to derive a static bound. It is also an input to a per-tick certification interface that accepts, corrects, or rejects a proposed dataplane action.

Safety filters, shielding, and control-barrier-function methods similarly wrap an arbitrary controller with online feasibility checks \cite{ames2017cbfqp,ames2019cbf,wabersich2021psf}. Packet queues make this setting less tidy than smooth continuous-state systems. Service is packetised, queues reflect at zero, actuation may be delayed, and the scheduler may realise only a lower-bounded fraction of the requested service target. The certified operator therefore exposes these assumptions as part of its status and diagnostic interface.

Verified systems and dataplane-correctness work address a different layer of trust. Translation validation and machine-checked implementations can reduce the trusted computing base of the compiler or runtime \cite{sewell2013translation,klein2014comprehensive}. This paper does not claim such mechanised verification. It uses these traditions to define a runtime certification boundary for executed packet-network actions. The certificate compiler, solver implementation, and dataplane backend remain trusted components unless separately verified.

\begin{center}
\begin{minipage}{0.96\linewidth}
\small
\textbf{Core distinction.}
Existing adaptive control work often analyses a policy, controller, or intended action. This paper analyses only the action actually executed after certification. The proposer may be heuristic, learned, adversarial, or agentic. Correctness attaches to the certified output \(u(t)\), the status flag \(\sigma(t)\), the diagnostic report \(r(t)\), and the exported envelope contract \(\bar z(t)\).
\end{minipage}
\end{center}

\begin{table}[t]
\centering
\small
\caption{Guarantee semantics exposed by the certified operator.}
\label{tab:guarantee_semantics}
\begin{tabular}{p{0.19\linewidth}p{0.25\linewidth}p{0.34\linewidth}p{0.15\linewidth}}
\toprule
Status or flag & Action taken & Guarantee available & Logged evidence \\
\midrule
\texttt{CERTIFIED} & Projected action \(u(t)\) & Safety and drift guarantees, provided the declared envelope, backlog bound, and service-tracking lower bound are valid & Envelope, action, status \\
\texttt{INFEASIBLE} & Slack-minimising fallback & No hard guarantee; quantified violation under the compiled constraints & Slack vector \\
\texttt{MISSING\_ENVELOPE} & Hold-safe or fallback & No compositional guarantee for that tick & Missing contract \\
\texttt{MISSING\_STATEBOUND} & Hold-safe or fallback & No state-based safety guarantee for that tick & Missing state bound \\
Breach flag \(b(t)=1\) & Action may still run & Downstream composition is not valid for that tick & Breach flag \\
\bottomrule
\end{tabular}
\end{table}

% =========================
\section{Model and platform realisation}
\label{sec:model}
% =========================

\paragraph{Queues and actions.}
Time is slotted $t=0,1,2,\dots$ with step $\Delta$. For a module (node or link scheduler) $M$, let $q_i^M(t)\in \RR_{\ge 0}$ denote backlog of class-queue $i\in\{1,\dots,N_M\}$ at time $t$.

Arrivals to module $M$ consist of exogenous arrivals $a_i^M(t)$ and routed inflow $\mathrm{in}_i^M(t)$. The module executes service $\mu_i^M(t)\ge 0$ and shedding $\delta_i^M(t)\ge 0$ (drop, admission denial, or policing). Define total removal
\begin{equation}
s_i^M(t) \triangleq \mu_i^M(t)+\delta_i^M(t).
\label{eq:s_def}
\end{equation}
The queue update is
\begin{equation}
q_i^M(t+1) \;=\; \big[q_i^M(t) + a_i^M(t) + \mathrm{in}_i^M(t) - s_i^M(t)\big]_+.
\label{eq:q_update_module}
\end{equation}

Resource constraints define an admissible set. For a bottleneck resource at module $M$ with capacity $C_M(t)$,
\begin{equation}
\sum_{i=1}^{N_M} \mu_i^M(t) \le C_M(t),
\qquad \mu_i^M(t)\ge 0,\qquad \delta_i^M(t)\ge 0.
\label{eq:cap_module}
\end{equation}
Additional constraints (WFQ weight bounds, class caps, service floors) are modelled as linear inequalities in $\mu^M(t)$.

\paragraph{Delayed telemetry and delayed actuation.}
The proposer observes telemetry $y^M(t)$ rather than $q^M(t)$. A delayed, noisy measurement model is
\begin{equation}
y^M(t) = H_M q^M(t-\tau_y^M) + \nu^M(t),
\label{eq:telemetry}
\end{equation}
where $H_M$ aggregates counters, $\tau_y^M\ge 0$ is telemetry delay, and $\nu^M(t)$ is measurement noise.
Section~\ref{sec:delay} describes robust handling.

\paragraph{Arrival envelopes and adversaries.}
We assume declared envelopes on arrivals:
\begin{equation}
0 \le a_i^M(t) \le \bar a_i^M(t),\qquad
0 \le \mathrm{in}_i^M(t) \le \bar{\mathrm{in}}_i^M(t).
\label{eq:envelopes}
\end{equation}
Define the total envelope
\begin{equation}
\bar A_i^M(t) \triangleq \bar a_i^M(t) + \bar{\mathrm{in}}_i^M(t).
\label{eq:total_env}
\end{equation}
Adversarial timing shaping is captured by allowing arrivals to be any sequence respecting \eqref{eq:envelopes}. Correctness is with respect to the declared envelope.
This envelope-conditioned viewpoint matches a long line of drift-based queueing control, and it makes the distinction between executed actions and proposed actions explicit in the object that is analysed \cite{neely2010stochastic,tassiulas1992stability}. The use of envelope-bounded but otherwise unconstrained timing also aligns with worst-case perspectives studied in adversarial queueing \cite{borodin2001aqt}.

\subsection{Service targets and realised removals on commodity schedulers}
\label{app:tracking}

The model in \eqref{eq:q_update_module} uses $\mu_i(t)$ as the realised service removal on tick $t$.
On real platforms, the controller often sets weights or rates (WFQ, DRR, HTB), and realised per-tick removals can differ from targets because of packetisation, quantum effects, non-backlogged classes, variable link rate, and cross-traffic.

To make this explicit, we interpret the operator output as a \emph{service target} $\mu_i^{\mathrm{tar}}(t)$ and allow a bounded tracking model for realised service:
\begin{equation}
\mu_i(t) \ge \kappa_i(t)\,\mu_i^{\mathrm{tar}}(t),\qquad \kappa_i(t)\in[\kappa_{\min},1],
\label{eq:service_tracking}
\end{equation}
where $\kappa_{\min}$ is a conservative platform constant (or an online lower bound) that accounts for scheduler noise and rate variability over one control tick.

The operator can stay robust by finding its lower limit on the target variable through $\kappa_{\min}$.

Concretely, if certification requires $s_i(t)=\mu_i(t)+\delta_i(t)\ge \ell_i(t)$, it is sufficient to enforce
\begin{equation}
\kappa_{\min}\mu_i^{\mathrm{tar}}(t) + \delta_i(t) \ge \ell_i(t).
\label{eq:tracking_robust_lb}
\end{equation}
While this approach is cautious, it reflects how deployments actually operate. We authenticate based on the functionality that the platform can guarantee, rather than on an idealized clock-level scheduler.
In the experiments we log the realised tracking ratio $\mu_i(t)/\mu_i^{\mathrm{tar}}(t)$ to show when this conservatism matters.

\paragraph{Estimating \(\kappa_{\min}\) in deployment.}
The service-tracking factor should be treated as a calibrated platform parameter, not as a free modelling convenience. During a calibration window of \(W_\kappa\) ticks, the controller logs the target service \(\mu_i^{\mathrm{tar}}(t)\) and the realised removal \(\mu_i(t)\) for each backlogged class. It then computes
\begin{equation}
\widehat\kappa_i
=
Q_{p_\kappa}\!\left(
\left\{\frac{\mu_i(t)}{\max\{\mu_i^{\mathrm{tar}}(t),\eta_\mu\}}:
t\in\mathcal{W}_\kappa,\ q_i(t)>0
\right\}
\right),
\label{eq:kappa_quantile}
\end{equation}
where \(Q_{p_\kappa}\) is a low empirical quantile, for example \(p_\kappa=0.01\), and \(\eta_\mu>0\) avoids division by zero. A conservative deployment uses
\begin{equation}
\kappa_{\min}=\max\{\kappa_{\mathrm{floor}},\ \min_i \widehat\kappa_i-m_\kappa\},
\label{eq:kappa_min_calibration}
\end{equation}
with margin \(m_\kappa\) and a configured lower floor \(\kappa_{\mathrm{floor}}\). The estimate may be refreshed only when the platform is in a stable regime and the observed breach rate is low. Otherwise it is frozen. This prevents a transient overload period from silently weakening the certificate.

\begin{table}[t]
\centering
\small
\caption{Operational interpretation of the service-tracking factor. The categories are qualitative design guidance only; the present experiments do not estimate deployment-level \(\kappa_{\min}\).}
\label{tab:kappa_examples}
\begin{tabular}{p{0.31\linewidth}p{0.23\linewidth}p{0.36\linewidth}}
\toprule
Scheduler regime & Tracking reliability & Main source of loss from target \\
\midrule
Ideal fluid or long-tick scheduler & high & negligible packetisation error \\
Packetised WFQ/DRR with active classes & high to moderate & quantum, packet size, and class backlogging effects \\
HTB or paced queues under cross-traffic & moderate & rate policing, burst shaping, batching, and link variability \\
Congested software path with NIC batching & low to moderate & batching delay and short-window service jitter \\
\bottomrule
\end{tabular}
\end{table}

% -------------------------

\section{Unified certificate mechanism and configuration}
\label{sec:certs}
% =========================

At each time $t$, certification induces a certified feasible set $\mathcal{U}_{\mathrm{cert}}^M(t)$ over actions.
We express certificates in a form that can be compiled from configuration and checked online.

\paragraph{Type B: barrier certificates for safety}
Fix backlog caps $Q_{i}^{M,\max}>0$ and define the safe set
\begin{equation}
\mathcal{S}^M=\{q^M\in \RR_{\ge 0}^{N_M}: q_i^M\le Q_{i}^{M,\max}\ \forall i\}.
\label{eq:safe_set}
\end{equation}
A sufficient one-step barrier condition for forward invariance under \eqref{eq:q_update_module} and \eqref{eq:envelopes} is
\begin{equation}
q_i^M(t) + \bar A_i^M(t) - s_i^M(t) \le Q_{i}^{M,\max}\quad \forall i,
\label{eq:barrier_safety}
\end{equation}
equivalently,
\begin{equation}
s_i^M(t) \ge q_i^M(t) + \bar A_i^M(t) - Q_{i}^{M,\max}\quad \forall i.
\label{eq:barrier_safety_lb}
\end{equation}

Safety often includes linear constraints such as service floors for protected classes $\mathcal{P}^M$,
\begin{equation}
\mu_i^M(t) \ge \phi_i^M C_M(t),\quad \forall i\in\mathcal{P}^M,
\label{eq:floor}
\end{equation}
and mitigation caps for flagged classes $\mathcal{F}^M(t)$,
\begin{equation}
\mu_i^M(t) \le \rho_i^M C_M(t),\quad \forall i\in\mathcal{F}^M(t).
\label{eq:cap_mitig}
\end{equation}
Figure~\ref{fig:priority_inversion} illustrates why explicit floors/caps are needed: without them, a proposer can induce a priority inversion in which the latency class tail delay degrades sharply while the bulk class appears artificially good.

\paragraph{Type A: drift certificates for stability}
Fix high thresholds $Q_{i}^{M,\mathrm{hi}}\in(0,Q_{i}^{M,\max})$ and margins $\varepsilon_i^M>0$. The drift trigger is
\begin{equation}
q_i^M(t)\ge Q_{i}^{M,\mathrm{hi}}
\;\Rightarrow\;
s_i^M(t) \ge \bar A_i^M(t)+\varepsilon_i^M.
\label{eq:drift_trigger}
\end{equation}
This is enforced online and yields an explicit Foster--Lyapunov drift bound (Section~\ref{sec:stability}).

\paragraph{Type C: contract certificates for composition}
For each directed interconnection $U\to V$, define the realised outflow signal $z_{U\to V}(t)$ that enters downstream as inflow. A contract certificate exports an envelope
\begin{equation}
0 \le z_{U\to V}(t) \le \bar z_{U\to V}(t),
\label{eq:export_env}
\end{equation}
and downstream assumes
\begin{equation}
\bar{\mathrm{in}}^V(t)\ \ge\ \sum_{U\in\mathrm{Pred}(V)} \bar z_{U\to V}(t).
\label{eq:compose_in}
\end{equation}
Modules exchange bounds, not internal policies.

\paragraph{Contract mechanics.}
This section gives only the interface. Section~\ref{app:contracts} defines how realised inter-module flow \(z\) and exported envelopes \(\bar z\) are computed from executed removals, including the physical cap that prevents a module from exporting more work than backlog plus envelope arrivals.

% =========================
\subsection{Certificates as configuration and feasibility diagnostics}
\label{app:config}
% -------------------------

For each class $i$ at module $M$, the configuration specifies queue limits, tolerance margin, optional caps, and the action form:
\begin{align}
\begin{split}
   \Theta_i^M = \big(Q_{i}^{M,\max},\ Q_{i}^{M,\mathrm{hi}},\ \varepsilon_i^M,   \text{optional caps},\ \text{action parameterisation}\big).
\label{eq:theta_fields} 
\end{split}
\end{align}
The action parameterisation maps an executed decision to platform knobs such as WFQ weights, HTB rates, or policing rates.

\paragraph{Compilation to constraints.}
At each tick, the operator uses $\Theta^M$ and the local information set $\mathcal{I}^M(t)$ to construct the certified feasible set $\mathcal{U}_{\mathrm{cert}}^M(t;\Theta^M)$. The construction compiles these fields into:
\begin{itemize}
\item barrier constraints enforcing \eqref{eq:barrier_safety} and any optional floors or caps,
\item drift constraints enforcing \eqref{eq:drift_trigger} when active,
\item resource constraints enforcing \eqref{eq:cap_module} and action limits,
\item contract rules defining the exported envelopes in \eqref{eq:export_env}.
\end{itemize}

\textbf{Proof obligations checked by the operator.}
Theorems in this paper apply on ticks where $\sigma^M(t)=\texttt{CERTIFIED}$.
Certification requires: (i) envelopes $\bar A^M(t)$ are present and nonnegative, (ii) capacity and action limits are present,
(iii) drift activation and barrier constraints are built using $\overline q^M(t)$ when telemetry is delayed or bounded,
(iv) exported envelopes $\bar z^M(t)$ are computed from the executed action $u^M(t)$, not the proposal $\tilde u^M(t)$,
and (v) infeasibility is reported explicitly (no silent fallback).

\paragraph{A single feasibility lemma for \texttt{CERTIFIED}.}
\label{sec:feasibility_lemma}

We state one reusable feasibility condition that explains when \texttt{CERTIFIED} is expected and when \texttt{INFEASIBLE} is unavoidable.

Let $\ell_i(t)$ be the compiled per-class lower bound on total removal $s_i=\mu_i+\delta_i$ at time $t$ induced by safety and drift constraints (with the operator’s chosen state bound and envelope). Let optional per-class service floors/caps be $\mu_i\ge \underline\mu_i$ and $\mu_i\le \overline\mu_i$, and let $\delta_i\le \overline\delta_i$ be an optional shedding cap (take $\overline\delta_i=\infty$ if uncapped).

\begin{lemma}[Sufficient feasibility condition]
\label{lem:feasible_certified}
Fix $t$ and suppose the operator has the required inputs (state bound, envelope, and capacity). Define a conservative service requirement
\begin{equation}
r_i(t)\triangleq \max\{\,\underline\mu_i,\ \ell_i(t)-\overline\delta_i\,\},
\label{eq:residual_req}
\end{equation}
with the convention $\ell_i-\overline\delta_i=-\infty$ when $\overline\delta_i=\infty$.
If there exists $\mu\in\RR_{\ge 0}^N$ such that
\begin{equation}
\sum_i \mu_i \le C(t),
\qquad
r_i(t)\le \mu_i \le \overline\mu_i \ \ \forall i,
\label{eq:feasible_mu}
\end{equation}
then there exists an action $(\mu,\delta)$ satisfying all compiled linear constraints with zero slack, and hence the certified feasible set $\mathcal{U}_{\mathrm{cert}}(t)$ is nonempty.
In this case, the projection-based operator can report \texttt{CERTIFIED} (up to solver and numerical tolerances).
\end{lemma}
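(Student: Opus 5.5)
The plan is a direct construction: take the $\mu$ from \eqref{eq:feasible_mu}, build a matching shedding vector $\delta$ by hand, and then check each family of compiled constraints from the list in Section~\ref{app:config}. Given the feasible $\mu$, I set
\[
\delta_i \triangleq \max\{0,\ \ell_i(t)-\mu_i\}.
\]
This is the smallest nonnegative shedding that closes the gap between service and the compiled lower bound $\ell_i(t)$. By construction $s_i=\mu_i+\delta_i\ge \ell_i(t)$. Because $\ell_i(t)$ was defined as the per-class lower bound induced by \eqref{eq:barrier_safety_lb} and by the active drift triggers \eqref{eq:drift_trigger} (taking the maximum over the active ones), this single inequality gives both the barrier and the drift constraints.

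Next I check the shedding cap $\delta_i\le\overline\delta_i$. If $\overline\delta_i=\infty$ there is nothing to prove. Otherwise $\mu_i\ge r_i(t)\ge \ell_i(t)-\overline\delta_i$, so $\ell_i(t)-\mu_i\le\overline\delta_i$. Together with $0\le\overline\delta_i$ (an assumed well-formedness of the configuration), this gives $\delta_i\le\overline\delta_i$.

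Then I check the remaining families directly from \eqref{eq:feasible_mu}:
\begin{itemize}
\item the capacity constraint \eqref{eq:cap_module} holds because $\sum_i\mu_i\le C(t)$ and $\mu,\delta\ge 0$;
\item floors (including \eqref{eq:floor}) hold because $\mu_i\ge r_i(t)\ge\underline\mu_i$;
\item caps (including \eqref{eq:cap_mitig}) hold because $\mu_i\le\overline\mu_i$.
\end{itemize}
For this step I would state explicitly that the floors and caps of \eqref{eq:floor} and \eqref{eq:cap_mitig}, and any action limits, are folded into $\underline\mu_i,\overline\mu_i$. The contract rules only \emph{define} $\bar z$ from the executed action and impose no feasibility restriction, so they cannot be violated. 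Every compiled linear constraint therefore holds with zero slack, and $(\mu,\delta)\in\mathcal{U}_{\mathrm{cert}}(t)$, so the set is nonempty.

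Finally, the projection step minimises distance to $\tilde u(t)$ over a nonempty closed polyhedron. A minimiser therefore exists, and an exact solver returns a feasible point with zero slack, which triggers \texttt{CERTIFIED}. Only numerical tolerance qualifies this claim.

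The main obstacle is bookkeeping rather than analysis. I must make sure that $\ell_i(t)$ really aggregates every lower-bound-on-$s_i$ constraint, including the robust version \eqref{eq:tracking_robust_lb} if $\kappa_{\min}$ is used. If tracking is used, I would restate the argument with $\mu^{\mathrm{tar}}$ and replace $\mu_i$ by $\kappa_{\min}\mu_i^{\mathrm{tar}}$ in the definition of $\delta_i$. I would also note that no other constraint couples $\delta$ across classes.
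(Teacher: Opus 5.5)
Your construction $\delta_i=\max\{0,\ell_i(t)-\mu_i\}$, followed by the family-by-family check of the compiled constraints, is correct and is the argument the paper intends: the paper gives no written proof, and its remark after the lemma (absorb the allowed shedding $\overline\delta_i$, then fit the residual requirement into capacity and floors/caps) is your proof in words. Two small cautions: the final \texttt{CERTIFIED} step follows directly only if the hard projection \eqref{eq:projection} is attempted first, because if the status were read off the penalised QP \eqref{eq:main_emergency_qp_obj}--\eqref{eq:main_emergency_qp_slack} alone, a nonempty feasible set forces $\xi^\star=0$ only when $\alpha$ dominates the optimal multipliers of the constraints $\mu_i+\delta_i\ge\ell_i(t)$ (an exact-penalty condition); and in your tracking variant you must also rescale the hypothesis to $\kappa_{\min}\mu_i^{\mathrm{tar}}\ge\ell_i(t)-\overline\delta_i$, since $\mu_i^{\mathrm{tar}}\ge\ell_i(t)-\overline\delta_i$ alone no longer yields $\delta_i\le\overline\delta_i$.
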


The condition says: after you account for any allowed shedding $\overline\delta_i$, the remaining per-class requirement must fit inside capacity and service floors/caps. With actuation delay, the apply-time compilation in Section~\ref{sec:delay_applied} increases the effective lower bounds $\ell_i(t)$ because the operator must protect against arrivals during the delay window while actions are still in flight. Thus Lemma~\ref{lem:feasible_certified} is also the diagnostic test for when certification is structurally possible at a tick.

% -------------------------

\section{Certified operator and closed-loop realisation}
\label{sec:cert_operator}
% =========================

We now define the certified operator, which is the only component allowed to
emit the executed action.

\begin{definition}[Certified operator]
\label{def:certified_operator}
Fix a certificate configuration $\Theta$ (Section~\ref{app:config}).
The certified operator is a map
\begin{equation}
\big(u^M(t), \bar z^M(t), \sigma^M(t), r^M(t)\big)
\;=\;
\mathcal{C}_{\Theta}^M\!\big(\mathcal{I}^M(t),\, \tilde u^M(t)\big),
\label{eq:cert_op}
\end{equation}
where $\mathcal{I}^M(t)$ contains a backlog estimate or bound, a total envelope
$\bar A^M(t)$, capacity and action limits, and the proposal $\tilde u^M(t)$.
The outputs are the executed action $u^M(t)$, exported envelope $\bar z^M(t)$,
status flag $\sigma^M(t)$, and diagnostic report $r^M(t)$.
\end{definition}

\paragraph{Status semantics.}
The status flag takes values in a small set:
\begin{align}
    \begin{split}
        \sigma^M(t)\in\{\texttt{CERTIFIED},\ \texttt{INFEASIBLE},
\texttt{MISSING\_ENVELOPE},\texttt{MISSING\_STATEBOUND}\}.
    \end{split}
\end{align}
Breach is recorded separately through \(b^M(t)\), because a breach concerns the validity of the declared contract rather than only the feasibility of the projection. Table~\ref{tab:guarantee_semantics} summarises the resulting guarantee semantics. All theorems in this paper apply to time steps for which \(\sigma^M(t)=\texttt{CERTIFIED}\), \(b^M(t)=0\), the state bound is valid, and the realised service satisfies the tracking lower bound. If any of these conditions fail, the operator still returns a logged action and diagnostic report, but the hard guarantee is not claimed for that tick.

\subsection{Contract validity and infeasibility semantics}
\label{sec:semantics_short}
\label{app:fallback}

\paragraph{Envelope breach indicator.}
Let the realised total arrival be \(A^M(t)\triangleq a^M(t)+\mathrm{in}^M(t)\).
Define the per-tick breach indicator as
\begin{equation}
b^M(t)\triangleq \1\{A^M(t)>\bar A^M(t)+\eta_A\},
\label{eq:breach_indicator}
\end{equation}
where \(\eta_A\ge 0\) is a measurement tolerance.
Breaches are recorded explicitly and determine when downstream envelope contracts may be used, as described in Section~\ref{app:contracts}.

\paragraph{Always-defined behaviour under infeasibility.}
Infeasibility is expected under overload or under constraints that are too tight for the available capacity.
A mechanism that becomes undefined under stress is a poor operational bargain.
It is also easy to exploit.
For this reason, the emergency policy is part of the operator semantics.

\paragraph{Emergency policy: minimise barrier slack.}
When the certified constraints are infeasible, the operator solves a relaxed projection with explicit nonnegative slack variables.
Let \(u=(\mu,\delta)\), and let \(\ell_i(t)\) denote the compiled lower bound on \(s_i=\mu_i+\delta_i\) induced by safety and drift constraints.
With slack vector \(\xi\ge 0\), the operator solves
\begin{align}
\min_{\mu,\delta,\xi}\quad &
\|u-\tilde u\|_{W_M}^2 + \alpha\,\mathbf{1}^\top \xi
\label{eq:main_emergency_qp_obj}
\\
\text{s.t.}\quad &
\sum_i \mu_i \le C_M(t),\quad \mu\ge 0,\quad \delta\ge 0,
\label{eq:main_emergency_qp_cap}
\\
&
(\mu_i+\delta_i) + \xi_i \ge \ell_i(t)\quad \forall i.
\label{eq:main_emergency_qp_slack}
\end{align}
If the optimum has \(\xi^\star=0\), the operator reports \texttt{CERTIFIED}.
Otherwise, it reports \texttt{INFEASIBLE} and returns \((u^\star,\xi^\star)\).
Thus behaviour is always defined, and the violation is exported as a quantified and auditable signal.

The slack \(\xi^\star\) certifies how far the system is from satisfying its hard constraints under the current capacity and declared envelopes.
The slack \(\xi^\star\) separates two regimes. When \(\xi^\star=0\), the original guarantees apply on valid certified ticks. When \(\xi^\star>0\), the operator has reached a structurally infeasible point and returns the best-effort action under the stated slack criterion, and produces \texttt{INFEASIBLE} tick. Such \texttt{INFEASIBLE} ticks can arise even with a perfect proposer, for example under very tight \(Q^{\max}\), large envelopes, actuation delay \(\tau_u\), or insufficient resources. We therefore report both the \texttt{INFEASIBLE} fraction and the slack magnitude as diagnostic quantities.

% =========================
\subsection{Certified closed-loop control}
\label{sec:pcp}
% =========================

The proposer, whether learned or heuristic, observes telemetry history \(y^M(\le t)\) and emits a candidate action \(\tilde u^M(t)\).
\begin{equation}
\tilde u^M(t) = \pi_\theta^M(\phi^M(y^M(\le t))).
\label{eq:proposer}
\end{equation}
In the basic parameterisation, \(\tilde u^M(t)=(\tilde\mu^M(t),\tilde\delta^M(t))\).

\paragraph{Certified projection: operator realisation.}

The certified operator returns platform-level targets, written as \((\mu^{\mathrm{tar}}(t),\delta(t))\).
When the platform implements these targets, realised removals satisfy the bounded tracking model in Section~\ref{app:tracking}.

\begin{align}
  \begin{split}
      u^M(t) = \arg\min_{u\in\mathcal{U}_{\mathrm{cert}}^M(t;\Theta^M)} \|u-\tilde u^M(t)\|_{W_M}^2,
\\ \text{when } \sigma^M(t)=\texttt{CERTIFIED}.
\label{eq:projection}
  \end{split}
\end{align}

For bottleneck scheduling with \(\mu^M\) and \(\delta^M\), a practical convex form is the following small quadratic programme:
\begin{align}
\min_{\mu^M,\delta^M}\quad &
\|\mu^M-\tilde\mu^M\|_2^2 + \lambda_M\|\delta^M-\tilde\delta^M\|_2^2 + M_M \mathbf{1}^\top \delta^M
\label{eq:qp_obj}
\\
\text{s.t.}\quad &
\sum_i \mu_i^M \le C_M(t),\quad \mu_i^M\ge 0,\quad \delta_i^M\ge 0,
\label{eq:qp_cap}
\\
&
q_i^M(t) + \bar A_i^M(t) - (\mu_i^M+\delta_i^M) \le Q_{i}^{M,\max}\quad \forall i,
\label{eq:qp_safety}
\\
&
q_i^M(t)\ge Q_{i}^{M,\mathrm{hi}} \Rightarrow (\mu_i^M+\delta_i^M) \ge \bar A_i^M(t)+\varepsilon_i^M\quad \forall i,
\label{eq:qp_drift}
\end{align}
with optional linear constraints such as \eqref{eq:floor} and \eqref{eq:cap_mitig}.
In implementation, the implication in \eqref{eq:qp_drift} is enforced by activating the constraint for every index with \(q_i^M(t)\ge Q_{i}^{M,\mathrm{hi}}\).

\FloatBarrier

\begin{center}
\begin{minipage}{\linewidth}
  \centering
  \includegraphics[width=.8\linewidth]{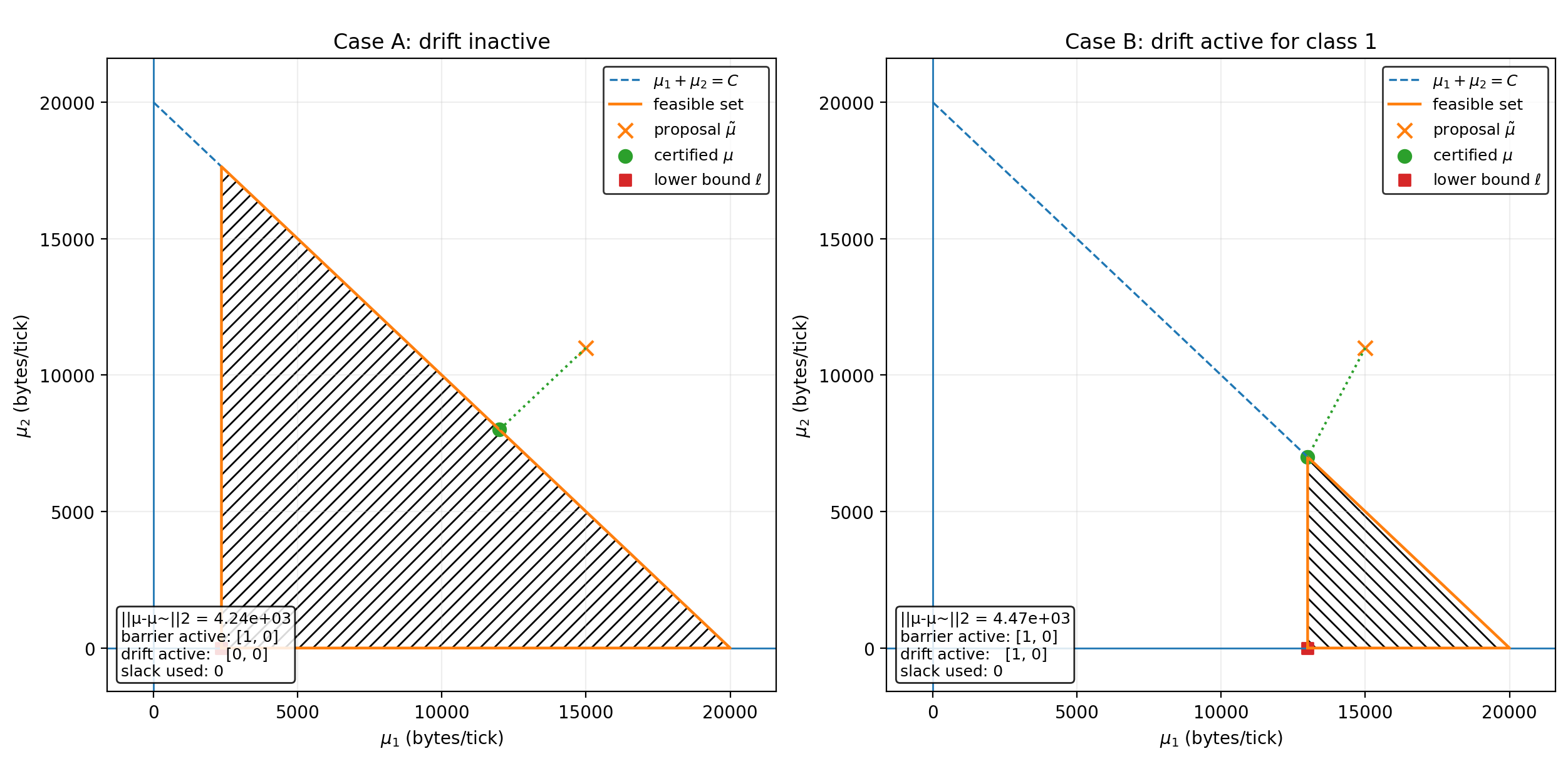}
  \captionof{figure}{%
  \textbf{Certificate-induced projection geometry by class.}
  Left: drift inactive.
  Right: drift active for class~1.
  The hatched polygon shows the feasible set induced by the certificate in the \((\mu_1,\mu_2)\) plane, measured in bytes per tick.
  The set is bounded by the capacity \(C\) and the compiled per-class lower bound \(\ell\).%
  }
  \label{fig:proj_geometry}
\end{minipage}
\end{center}

\FloatBarrier

Figure~\ref{fig:proj_geometry} illustrates the projection view. The certificate induces a convex feasible region in the \((\mu_1,\mu_2)\) plane, and the executed action is the metric projection of the proposal onto that region. When drift is inactive, the feasible set is broad. When drift activates for class~1, the lower bound tightens, the feasible set shrinks, and the projection shifts service toward that class while preserving the capacity constraint. If the feasible set is empty, the operator reports \texttt{INFEASIBLE} and enters the slack-handling mode.

\paragraph{Operator-level safety theorem.}

\begin{theorem}[Safety of the certified operator]
\label{thm:op_safety}
Assume \eqref{eq:envelopes}. Consider a module \(M\) whose operator \(\mathcal{C}_\Theta^M\) reports \(\sigma^M(t)=\texttt{CERTIFIED}\) at time \(t\). Suppose the state used for certification is valid for the current queue state, and the operator enforces the barrier constraint \eqref{eq:barrier_safety} using the declared arrival envelope \(\bar A^M(t)\). If \(q^M(t)\in\mathcal{S}^M\), then \(q^M(t+1)\in\mathcal{S}^M\). Consequently, if \(q^M(0)\in\mathcal{S}^M\), the required envelopes and state bounds remain valid, and \(\sigma^M(t)=\texttt{CERTIFIED}\) for all \(t\), then \(q^M(t)\in\mathcal{S}^M\) for all \(t\ge 0\).
\end{theorem}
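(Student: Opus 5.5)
The plan is a one-step argument followed by induction. Fix a class \(i\) at module \(M\) and a tick \(t\) with \(\sigma^M(t)=\texttt{CERTIFIED}\). From the queue update \eqref{eq:q_update_module},
\[
q_i^M(t+1)=\big[q_i^M(t)+a_i^M(t)+\mathrm{in}_i^M(t)-s_i^M(t)\big]_+ .
\]
The declared envelopes \eqref{eq:envelopes} and the definition \eqref{eq:total_env} give \(a_i^M(t)+\mathrm{in}_i^M(t)\le \bar A_i^M(t)\). Since \(x\mapsto[x]_+\) is monotone nondecreasing, this yields
\[
q_i^M(t+1)\le \big[q_i^M(t)+\bar A_i^M(t)-s_i^M(t)\big]_+ .
\]

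Next I use the certificate. Because the operator reported \texttt{CERTIFIED}, the executed action lies in \(\mathcal{U}_{\mathrm{cert}}^M(t)\), so the barrier constraint \eqref{eq:barrier_safety} holds. In the emergency-QP view this is the case \(\xi^\star=0\) in \eqref{eq:main_emergency_qp_slack}. The constraint was compiled with the certification state; I will call it \(\overline q_i^M(t)\) to keep the two quantities distinct. The hypothesis that this state is valid for the current queue means \(q_i^M(t)\le \overline q_i^M(t)\). Hence
\[
q_i^M(t)+\bar A_i^M(t)-s_i^M(t)\le \overline q_i^M(t)+\bar A_i^M(t)-s_i^M(t)\le Q_i^{M,\max}.
\]
Combining this with \(Q_i^{M,\max}>0\) gives \([\,\cdot\,]_+\le Q_i^{M,\max}\). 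Nonnegativity of \(q_i^M(t+1)\) is automatic from the positive part. Since \(i\) was arbitrary, \(q^M(t+1)\in\mathcal{S}^M\). The hypothesis \(q^M(t)\in\mathcal{S}^M\) is not actually needed for this step, because the barrier condition is absolute rather than incremental; I will note this but keep the statement as given.

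The main subtlety is the meaning of \(s_i^M(t)\) under the service-tracking model of Section~\ref{app:tracking}. The realised removal \(s_i^M=\mu_i+\delta_i\) is what enters \eqref{eq:q_update_module}, whereas the operator constrains targets. I will handle this by reading the compiled barrier constraint in its robust form \eqref{eq:tracking_robust_lb}, namely \(\kappa_{\min}\mu_i^{\mathrm{tar}}+\delta_i\ge \ell_i(t)\), where \(\ell_i(t)\) is the lower bound from \eqref{eq:barrier_safety_lb}. I then invoke the standing assumption \eqref{eq:service_tracking} on valid ticks, \(\mu_i\ge\kappa_i\mu_i^{\mathrm{tar}}\ge\kappa_{\min}\mu_i^{\mathrm{tar}}\), which gives realised \(s_i^M(t)\ge \ell_i(t)\). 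This reduces the tracking case to the idealised one treated above. If the theorem is read with \(\mu\) as realised service directly, this step is vacuous.

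The consequence follows by induction on \(t\). The base case is \(q^M(0)\in\mathcal{S}^M\). At each subsequent tick, the certified status, the valid envelope, and the valid state bound re-establish all hypotheses of the one-step claim, so \(q^M(t)\in\mathcal{S}^M\) for all \(t\ge0\).
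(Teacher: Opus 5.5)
Your proof is correct and takes essentially the same route as the argument the paper signals when it calls \eqref{eq:barrier_safety} a ``sufficient one-step barrier condition for forward invariance'': bound the realised arrivals by $\bar A_i^M(t)$ inside the monotone positive part, apply the certified barrier constraint to get $q_i^M(t+1)\le Q_i^{M,\max}$, and conclude by induction. Your extra steps are sound refinements consistent with the paper's standing assumptions, not a different argument. These are separating the certification bound $\overline q_i^M(t)\ge q_i^M(t)$ from the true backlog, and reducing the service-tracking case to the idealised one via \eqref{eq:tracking_robust_lb}.
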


% =========================
\section{Delay, partial observability, and algorithms}
\label{sec:delay}
% =========================

\paragraph{Envelope and state-bound construction for deployment.}
\label{sec:env_bounds_recipe}

The operator interface needs two quantities. The first is a declared arrival envelope \(\bar A(t)\). The second is a certified backlog interval,
\[
q(t)\in[\underline q(t),\overline q(t)].
\]
The theorem above treats these quantities as valid inputs. In deployed packet systems, however, they must be constructed from delayed counters, sampled telemetry, and local traffic assumptions. We now describe this construction. In practice, \(\bar A(t)\) is obtained in one of three ways, depending on what the operator is allowed to assume.

\begin{enumerate}
\item \textbf{Policy envelope (contracted).}
For traffic classes with explicit contracts (tenant limits, policers, admission control), the envelope is specified by configuration, for example token-bucket style:
\begin{equation}
\bar A(t) = \min\{B + R\Delta,\ \bar A^{\max}\},
\label{eq:env_tokenbucket}
\end{equation}
where $R$ is a configured rate, $B$ is a burst budget, and $\Delta$ is the control interval.
This is the most direct point of contact with network calculus: a token-bucket contract is an arrival-curve style assumption. In our framework, however, it serves as an input contract to runtime certification rather than as the whole analysis.
\item \textbf{Empirical envelope (calibrated).}
When no explicit contract exists, we calibrate $\bar A(t)$ from a recent history window of measured arrivals, using a high quantile and a slack margin:
\begin{equation}
\bar A(t) \;=\; Q_{p}\!\left(\{A(\tau)\}_{\tau=t-W}^{t-1}\right) + m_A,
\label{eq:env_quantile}
\end{equation}
where $Q_p(\cdot)$ is the empirical $p$-quantile (e.g., $p=0.99$ or $0.999$), $W$ is a window length in ticks, and $m_A$ is a configured safety margin.

\item \textbf{Hybrid envelope (policy plus calibration).}
A common compromise is to cap the empirical envelope by a configured maximum and to floor it by a minimum contract:
\begin{equation}
\bar A(t) \;=\; \min\big\{\bar A^{\max},\ \max\{\bar A^{\min},\ Q_p(\cdot)+m_A\}\big\}.
\label{eq:env_hybrid}
\end{equation}
\end{enumerate}

\paragraph{Backlog bounds from delayed counters and local accounting.}
Let the platform expose a delayed backlog-related signal at time $t$ (a queue occupancy counter, a byte backlog estimate, or an aggregate) of the form
$y(t)\approx q(t-\tau_y)$, possibly with aggregation and noise as in \eqref{eq:telemetry}.
To obtain a bound at the current tick, we use two facts that are available to the controller:
(i) the declared envelope upper-bounds arrivals between counter samples, and
(ii) the controller knows the actions it has already scheduled to apply (including any actuation buffer). A simple upper bound is:
\begin{equation}
\overline q(t)
\;=\;
\Big[y(t) + \sum_{s=0}^{\tau_y-1}\bar A(t-\tau_y+s) - \sum_{k=0}^{\tau_y-1}\hat s(t-\tau_y+k)\Big]_+ + \eta_q,
\label{eq:q_upper_from_delay}
\end{equation}
where $\hat s(\cdot)$ is the realised (or conservatively lower-bounded) total removal scheduled/applied on each tick
and $\eta_q\ge 0$ is a tolerance that absorbs counter noise, timestamp jitter, and aggregation error.

If the platform provides a conservative removal \emph{lower bound} $\hat s_{\min}(\cdot)$ (Section~\ref{app:tracking}),
then \eqref{eq:q_upper_from_delay} remains valid by substituting $\hat s_{\min}$.
If only an upper bound on removal is available, we keep safety conservative by relying on \eqref{eq:q_upper_from_delay} only (upper-bound mode).

A matching lower bound is optional in the core mechanism.
When required (for example, to reduce conservatism in drift-trigger activation), a standard conservative construction is:
\begin{equation}
\underline q(t)
\;=\;
\Big[y(t) - \sum_{k=0}^{\tau_y-1}\hat s_{\max}(t-\tau_y+k)\Big]_+ - \eta_q,
\label{eq:q_lower_from_delay}
\end{equation}
with $\hat s_{\max}$ an upper bound on applied removals. We clip $\underline q(t)$ at $0$.

When only bounds are available, the operator substitutes \(\overline q(t)\) wherever \(q(t)\) appears in safety constraints. Drift activation also uses \(\overline q(t)\) in the conservative mode. Thus the notation is fixed throughout: safety constraints are certified against the upper backlog bound, while \(\underline q(t)\) is used only when a lower bound is explicitly needed.

\paragraph{Applied-action consistent certification under actuation delay (conservative).}
\label{sec:delay_applied}

The operator can be read as a domain-specific safety filter: a proposer supplies a candidate action, and an online projection step enforces safety and feasibility \cite{ames2017cbfqp,ames2019cbf,wabersich2021psf}.

With actuation delay, the operator chooses $u(t)$ now, but the system applies $u(t)$ at $t+\tau_u$.
Meanwhile, actions already placed in the actuation pipeline will be applied on ticks $t,t+1,\dots,t+\tau_u-1$.

\paragraph{Action buffer model.}
Let $\hat s_i(t+k)$ denote the total removal already scheduled to be applied at future tick $t+k$ from actions chosen earlier (this is known to the controller because it is the controller’s own buffer). Let $\overline q_i(t)$ be a valid upper bound on current backlog and let $\bar A_i(t+s)$ be the declared envelope over the delay window.

\paragraph{A conservative apply-time bound.}
A mechanically checkable upper bound on the backlog just before $u(t)$ takes effect is
\begin{equation}
\overline q_i^{\,\mathrm{apply}}(t+\tau_u)
\triangleq
\Big[\overline q_i(t) + \sum_{s=0}^{\tau_u-1}\bar A_i(t+s) - \sum_{k=0}^{\tau_u-1}\hat s_i(t+k)\Big]_+.
\label{eq:q_apply_bound}
\end{equation}

\paragraph{Applied-action barrier constraint.}
To ensure the backlog cap at the first tick when the new action is applied, it is sufficient to enforce
\begin{equation}
\overline q_i^{\,\mathrm{apply}}(t+\tau_u) + \bar A_i(t+\tau_u) - s_i^{\mathrm{exec}}(t) \le Q_i^{\max},
\label{eq:barrier_applied}
\end{equation}
where $s_i^{\mathrm{exec}}(t)=\mu_i(t)+\delta_i(t)$ is the removal implied by the executed action returned by the operator at time $t$.

\paragraph{Applied-action drift trigger.}
A conservative version activates drift using $\overline q_i^{\,\mathrm{apply}}(t+\tau_u)$ and enforces
\begin{equation}
\overline q_i^{\,\mathrm{apply}}(t+\tau_u)\ge Q_i^{\mathrm{hi}}
\ \Rightarrow\
s_i^{\mathrm{exec}}(t)\ge \bar A_i(t+\tau_u)+\varepsilon_i.
\label{eq:drift_applied}
\end{equation}

% =========================
\subsection{Algorithms}
\label{sec:algorithms}
% =========================

Algorithm~\ref{alg:pcp} is the core online loop for one module $M$ (a queueing/scheduling element).
Inputs are deliberately minimal and match what deployments typically have:
(i) telemetry history $y^M(\le t)$, which may be delayed and noisy,
(ii) a backlog bound 
$[\underline q^M(t),\overline q^M(t)]$ (or only $\overline q^M(t)$ in the simplest mode),
(iii) declared envelopes for exogenous and inflow arrivals, which combine into a total envelope $\bar A^M(t)$, and
(iv) a capacity estimate $C_M(t)$ and action limits.
The operator compiles three kinds of certificates into constraints: barrier-style safety caps (Type~B), drift triggers (Type~A), and contract rules that define exported envelopes (Type~C).
It then solves a projection (or uses a closed-form fast path when applicable) to obtain an executed action that is feasible with respect to these constraints.
If the constraints are infeasible, the operator reports this explicitly through $\sigma(t)$ (and, if enabled, returns an emergency action with quantified slack as in Section~\ref{app:fallback}).
The module then applies the executed action (immediately or after  actuation delay) and updates the queue state by \eqref{eq:q_update_module}.

\begin{algorithm}[t]
\caption{Certified operator step at module $M$ (per time tick)}
\label{alg:pcp}
\begin{algorithmic}[1]
\Require Telemetry history $y^M(\le t)$, bounds $[\underline q^M(t),\overline q^M(t)]$,
         exogenous envelope $\bar a^M(t)$, inflow envelope $\bar{\mathrm{in}}^M(t)$,
         capacity $C_M(t)$
\State $\bar A^M(t)\gets \bar a^M(t)+\bar{\mathrm{in}}^M(t)$
\State Proposer outputs $(\tilde\mu^M(t),\tilde\delta^M(t)) \gets \pi_\theta^M(\phi^M(y^M(\le t)))$
\State Compile per-tick certificate constraints from $\Theta^M$ using $\overline q^M(t)$ and $\bar A^M(t)$
\State Solve projection to obtain executed $(\mu^M(t),\delta^M(t))$ and set status $\sigma^M(t)$
\State Export envelopes $\bar z_{M\to V}(t)$ and diagnostics $r^M(t)$
\State Apply action (immediately or with actuation delay) and update queues via \eqref{eq:q_update_module}
\end{algorithmic}
\end{algorithm}

For completeness, we give the DAG propagation, cyclic closure, and stress harness algorithms in Appendix~\ref{app:algorithms}.

\section{Stability guarantees}
\label{sec:stability}
% =========================

The safety condition can be read as a queue-domain analogue of barrier-function feasibility enforced online by a correction step, while the stability condition is deliberately aligned with drift-style arguments used throughout queueing-network control \cite{ames2017cbfqp,ames2019cbf,neely2010stochastic}.
We give a Foster--Lyapunov drift result aligned with the enforced drift trigger. For clarity, we drop the module superscript and write $q_i(t)$, $\bar A_i(t)$, and $s_i(t)$.

\begin{lemma}[One-step quadratic drift bound]
\label{lem:quad_drift}
Let $V(q)=\sum_{i=1}^N q_i^2$. For the update $q_i(t+1)=[q_i(t)+x_i(t)]_+$ with $x_i(t)=A_i(t)-s_i(t)$, it holds that
\begin{equation}
V(q(t+1)) - V(q(t))
\le 2\sum_{i=1}^N q_i(t)\,x_i(t) + \sum_{i=1}^N x_i(t)^2.
\label{eq:drift_basic}
\end{equation}
\end{lemma}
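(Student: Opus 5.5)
The plan is to reduce Lemma~\ref{lem:quad_drift} to a per-coordinate scalar inequality and then sum over classes. Since $V(q)=\sum_i q_i^2$ is separable and the update $q_i(t+1)=[q_i(t)+x_i(t)]_+$ acts coordinatewise, it suffices to show, for each $i$,
\begin{equation*}
q_i(t+1)^2 \le \big(q_i(t)+x_i(t)\big)^2 .
\end{equation*}
Expanding the right-hand side gives $q_i(t)^2 + 2q_i(t)x_i(t) + x_i(t)^2$. Subtracting $q_i(t)^2$ and summing over $i=1,\dots,N$ then yields \eqref{eq:drift_basic} directly.

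The only substantive step is the scalar fact $([y]_+)^2\le y^2$ for every real $y$, applied with $y=q_i(t)+x_i(t)$. We argue by cases.
\begin{itemize}
\item If $y\ge 0$, then $[y]_+=y$, and the inequality holds with equality.
\item If $y<0$, then $[y]_+=0\le y^2$.
\end{itemize}
This is where the reflection at zero enters. It can only decrease the square, so discarding it costs nothing in an upper bound.

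Note that the argument uses no sign information on $q_i(t)$, $A_i(t)$ or $s_i(t)$. It also does not need envelope validity or certification status, so the lemma holds pathwise for any realised arrivals and removals. It applies to realised $s_i(t)$ regardless of the service-tracking factor, which is what the subsequent Foster--Lyapunov argument needs when it plugs in the drift trigger \eqref{eq:drift_trigger}. I expect no genuine obstacle. The only point to state carefully is that the bound holds with $x_i(t)$ defined from realised quantities $A_i(t)=a_i(t)+\mathrm{in}_i(t)$, consistent with \eqref{eq:q_update_module}.
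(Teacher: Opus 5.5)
Your proof is correct and matches the standard argument for this lemma (the paper states the lemma without a written proof): apply the coordinatewise bound $([y]_+)^2\le y^2$ with $y=q_i(t)+x_i(t)$, expand, and sum over $i$. Your remark is also accurate: the bound holds pathwise and needs no sign, envelope, or certification assumptions.
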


\begin{theorem}[Foster--Lyapunov stability under certified drift trigger]
\label{thm:strong_stability}
Assume $0\le A_i(t)\le \bar A_i(t)\le \bar A_i^{\max}<\infty$ and $0\le s_i(t)\le \bar s_i<\infty$ for all classes $i=1,\dots,N$ and all $t$.
Fix thresholds $Q_i^{\mathrm{hi}}>0$ and margins $\varepsilon_i>0$, and suppose the executed action satisfies
\begin{equation}
q_i(t)\ge Q_i^{\mathrm{hi}} \;\Rightarrow\; s_i(t)\ge \bar A_i(t)+\varepsilon_i,
\quad \forall i,\ \forall t.
\label{eq:dominance_thm}
\end{equation}
Let $\varepsilon_{\min}\triangleq \min_{1\le i\le N} \varepsilon_i$ and $B_0\triangleq \sum_{i=1}^N Q_i^{\mathrm{hi}}$. Define
\begin{equation}
\widetilde K \triangleq \sum_{i=1}^N \bigl[2 Q_i^{\mathrm{hi}} \bar A_i^{\max} + (\bar A_i^{\max}+\bar s_i)^2\bigr],
\qquad
\beta \triangleq 2\varepsilon_{\min} B_0 + \widetilde K.
\label{eq:beta_def}
\end{equation}
Then, for all $t$,
\begin{equation}
\EE\!\left[V(q(t+1)) - V(q(t)) \mid q(t)\right]
\le -2\varepsilon_{\min}\,\|q(t)\|_1 + \beta,
\label{eq:global_drift}
\end{equation}
where $V(q)=\sum_{i=1}^N q_i^2$ and $\|q(t)\|_1=\sum_{i=1}^N q_i(t)$.
In particular,
\begin{equation}
\limsup_{T\to\infty}\frac{1}{T}\sum_{t=0}^{T-1}\EE[\|q(t)\|_1]
\;\le\; \frac{\beta}{2\varepsilon_{\min}}.
\label{eq:avg_backlog}
\end{equation}
\end{theorem}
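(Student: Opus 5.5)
The plan is to prove \eqref{eq:global_drift} \emph{pathwise}, that is, for every realisation, and only then take conditional expectations. I start from Lemma~\ref{lem:quad_drift} with $x_i(t)=A_i(t)-s_i(t)$:
\[
V(q(t+1))-V(q(t))\le 2\sum_i q_i(t)x_i(t)+\sum_i x_i(t)^2 .
\]
The quadratic term is handled uniformly. Since $0\le A_i\le\bar A_i^{\max}$ and $0\le s_i\le\bar s_i$, we have $|x_i|\le\max\{A_i,s_i\}\le \bar A_i^{\max}+\bar s_i$. Hence $\sum_i x_i^2\le\sum_i(\bar A_i^{\max}+\bar s_i)^2$, which is exactly the second part of $\widetilde K$.

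The cross term is handled class by class, with a case split on the drift trigger \eqref{eq:dominance_thm}. The target in both cases is the single inequality
\[
q_i x_i\le -\varepsilon_{\min}q_i+\varepsilon_{\min}Q_i^{\mathrm{hi}}+Q_i^{\mathrm{hi}}\bar A_i^{\max}.
\]

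\emph{Case $q_i(t)\ge Q_i^{\mathrm{hi}}$.} The certified action gives $s_i\ge\bar A_i+\varepsilon_i$. Together with the pathwise envelope validity $A_i\le\bar A_i$, this yields $x_i\le-\varepsilon_i\le-\varepsilon_{\min}$. Since $q_i\ge 0$, we get $q_ix_i\le-\varepsilon_{\min}q_i$, which is stronger than the target.

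\emph{Case $q_i(t)<Q_i^{\mathrm{hi}}$.} Here only $s_i\ge0$ is available, so $q_ix_i\le q_iA_i\le Q_i^{\mathrm{hi}}\bar A_i^{\max}$. To recover the $-\varepsilon_{\min}q_i$ term, I add and subtract $\varepsilon_{\min}q_i$ and use $\varepsilon_{\min}q_i\le\varepsilon_{\min}Q_i^{\mathrm{hi}}$.

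Summing the target inequality over $i$, multiplying by $2$, and adding the quadratic bound gives
\[
V(q(t+1))-V(q(t))\le-2\varepsilon_{\min}\|q(t)\|_1+2\varepsilon_{\min}B_0+\widetilde K=-2\varepsilon_{\min}\|q(t)\|_1+\beta .
\]
This holds for every sample path. Conditioning on $q(t)$ therefore preserves it, and \eqref{eq:global_drift} follows.

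For \eqref{eq:avg_backlog}, I take full expectations of the drift bound and telescope over $t=0,\dots,T-1$. This gives
\[
\EE V(q(T))-\EE V(q(0))\le-2\varepsilon_{\min}\sum_{t<T}\EE\|q(t)\|_1+T\beta .
\]
I then drop $\EE V(q(T))\ge0$, divide by $2\varepsilon_{\min}T$, and let $T\to\infty$. This step needs $\EE V(q(0))<\infty$, for example a deterministic initial state. I will state that as the standing initial condition.

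The main obstacle is not the algebra but making the hypotheses line up. The case split requires the trigger \eqref{eq:dominance_thm} to be stated in terms of the \emph{true} $q_i(t)$. When the operator activates drift using the upper bound $\overline q_i(t)\ge q_i(t)$, the trigger fires on a superset of ticks, so the implication still holds; I will note this in one sentence. The argument also uses $A_i\le\bar A_i$ pathwise, meaning no breach occurs, and that $s_i$ is the \emph{realised} removal. If tracking enters, the certified constraint must have been compiled as in \eqref{eq:tracking_robust_lb} so that the realised $s_i$, and not merely the target, meets the bound.
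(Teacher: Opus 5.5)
Your proof is correct and follows essentially the same route the statement's constants point to: apply Lemma~\ref{lem:quad_drift}, split each class on $Q_i^{\mathrm{hi}}$, bound $q_iA_i\le Q_i^{\mathrm{hi}}\bar A_i^{\max}$ and add and subtract $\varepsilon_{\min}q_i$ below threshold (which produces exactly $2\varepsilon_{\min}B_0$ and the first part of $\widetilde K$), bound the quadratic term uniformly by $(\bar A_i^{\max}+\bar s_i)^2$, then telescope. Your added hypothesis $\EE V(q(0))<\infty$ for the time-average bound, and your remarks on activating drift via $\overline q_i(t)\ge q_i(t)$ and on compiling the bound with $\kappa_{\min}$ so that the realised $s_i$ satisfies the trigger, are correct, and they make explicit conditions the statement leaves implicit.
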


\paragraph{Alignment with the operator.}
Condition \eqref{eq:dominance_thm} is exactly the drift trigger \eqref{eq:drift_trigger} enforced when $\sigma(t)=\texttt{CERTIFIED}$ with $A_i=\bar A_i$. Thus the theorem applies to executed actions by construction, independent of proposer quality.

\section{Envelope contracts and compositional guarantees}
\label{app:contracts}
% -------------------------

\subsection{Contract semantics: how $z$ and $\bar z$ are computed}
\label{sec:contract_semantics}

We make the contract signals concrete so that composition is mechanically checkable.

\paragraph{Routing model.}
Fix a routing specification for each edge $U\to V$. In the simplest (and common) case, each class $i$ at $U$ forwards a fixed fraction $r_{U\to V,i}\in[0,1]$ of its removals to $V$, with $\sum_{V} r_{U\to V,i}\le 1$ (the remainder exits the modelled network). More generally, this is a nonnegative class-mapping matrix, but the fraction model is enough to state the idea.

\paragraph{Realised flow signal.}
Let $s_i^U(t)=\mu_i^U(t)+\delta_i^U(t)$ denote the realised total removal from queue $i$ at $U$ on tick $t$ (that is, removal that actually occurs on that tick, after any actuation delay is applied). The realised inter-module flow is
\begin{equation}
z_{U\to V,i}(t)\triangleq r_{U\to V,i}\,s_i^U(t).
\label{eq:z_from_s}
\end{equation}
Stacking across classes gives $z_{U\to V}(t)\in\RR_{\ge 0}^{N}$.

\paragraph{Exported envelope.}
The exported envelope must upper bound what can be sent downstream under the assumptions used for certification. A conservative and easy-to-check choice is
\begin{align}
\begin{split}
    \bar z_{U\to V,i}(t)\triangleq r_{U\to V,i}\,\bar s_i^U(t),
\\
\bar s_i^U(t)\triangleq \min\big(s_i^{U,\mathrm{target}}(t),\ \overline q_i^U(t)+\bar A_i^U(t)\big),
\label{eq:zbar_cap}
\end{split}
\end{align}
where $s_i^{U,\mathrm{target}}(t)$ is the target removal implied by the applied action on tick $t$, and $\overline q_i^U(t)$ is the backlog upper bound used in certification. The cap by $\overline q_i^U+\bar A_i^U$ encodes the physical fact that, within one tick, you cannot remove more work than backlog plus arrivals under the declared envelope.

\paragraph{Why this form.}
With \eqref{eq:zbar_cap}, \(z_{U\to V}(t)\le \bar z_{U\to V}(t)\) follows from
fixed nonnegative routing fractions and the physical bound
\(s_i^U(t)\le \overline q_i^U(t)+\bar A_i^U(t)\). Composition therefore relies
only on exported envelopes, provided the declared arrival contract is respected.

\FloatBarrier

\begin{center}
\begin{minipage}{\linewidth}
  \centering
  \includegraphics[width=\textwidth]{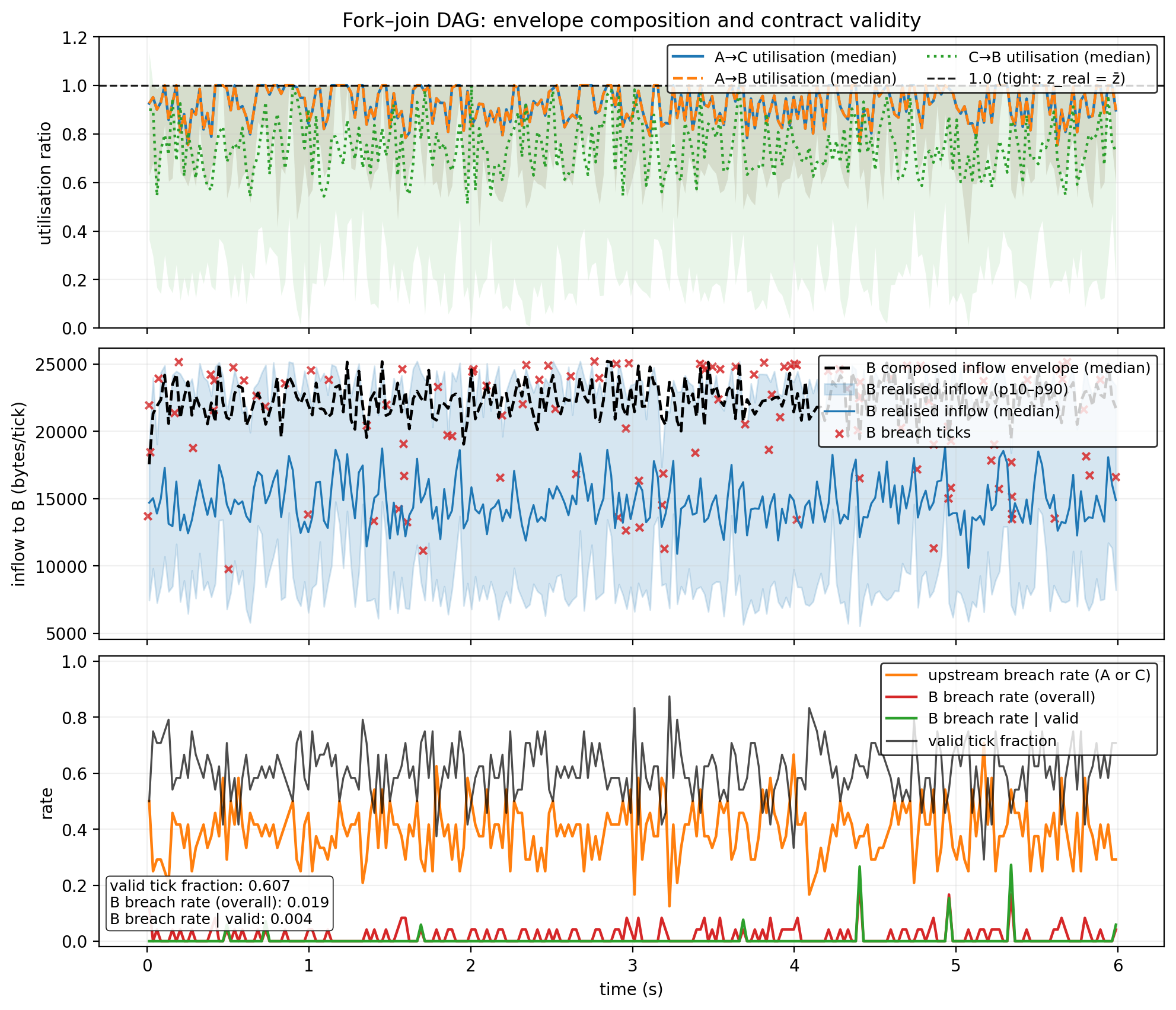}
  \captionof{figure}{%
  \textbf{Fork--join DAG composition, composed envelopes, and contract validity.}
  \textbf{Top:} link utilisation ratios for the fork--join paths, with the tight boundary at utilisation \(=1\).
  \textbf{Middle:} inflow to join node \(B\): composed inflow envelope at \(B\) ( dashed black) versus realised inflow (median and p10--p90 band), with join breach ticks marked.
  \textbf{Bottom:} breach and validity rates over time: upstream breach rate (from \(A\) or \(C\)), join breach rate at \(B\) (overall and conditioned on valid ticks), and the valid tick fraction.%
  }
  \label{fig:forkjoin}
\end{minipage}
\end{center}

\FloatBarrier

\subsection{Breach detection and semantics}
The guarantees in Sections~\ref{sec:stability} and \ref{sec:dag} are conditional
on the declared arrival envelope. If
\(A^M(t)>\bar A^M(t)+\eta_A\), the operator records a breach, and downstream
modules may rely on exported envelopes only on valid ticks. Pessimistic envelopes
remain safe but conservative. Optimistic envelopes increase breach and
infeasibility rates.

% -------------------------

\subsection{Compositional guarantees}
% =========================

\paragraph{DAG networks.}
\label{sec:dag}

We use a three-node fork--join DAG example, with nodes $A,C,B$ and edges $A\to C$, $A\to B$, $C\to B$. Node $A$ forks traffic; node $B$ is the join. In topological order $A\prec C\prec B$, envelope propagation is one pass:
\begin{equation}
\bar{\mathrm{in}}^C(t)\ge \bar z_{A\to C}(t),\qquad
\bar{\mathrm{in}}^B(t)\ge \bar z_{A\to B}(t)+\bar z_{C\to B}(t).
\label{eq:forkjoin_env}
\end{equation}

\begin{theorem}[Compositional safety and stability for a DAG]
\label{thm:dag_comp}
Consider a network of modules connected as a DAG. Suppose each module enforces (i) the safety barrier \eqref{eq:barrier_safety} and (ii) the drift trigger \eqref{eq:drift_trigger}, both with respect to its total envelope $\bar A^M=\bar a^M+\bar{\mathrm{in}}^M$.
Suppose inflow envelopes are assigned by \eqref{eq:compose_in} using exported outflow envelopes \eqref{eq:export_env} in a topological order. Then:
\begin{itemize}
\item (Safety) if $q^M(0)\in\mathcal{S}^M$ for all modules, then $q^M(t)\in\mathcal{S}^M$ for all $t$.
\item (Stability) each module satisfies the Foster--Lyapunov drift bound \eqref{eq:global_drift} with module-specific constants, hence has a finite long-run average backlog bound as in \eqref{eq:avg_backlog}.
\end{itemize}
\end{theorem}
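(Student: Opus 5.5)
The plan is to argue by induction over the topological order of the DAG, reducing everything to the single-module results already proved: Theorem~\ref{thm:op_safety} for safety and Theorem~\ref{thm:strong_stability} for stability.

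The key observation is that each module's guarantees depend only on one hypothesis about its inputs. The realised total arrival $A^M(t)=a^M(t)+\mathrm{in}^M(t)$ must satisfy $0\le A^M(t)\le \bar A^M(t)$ at every tick. Once this is established for every module, the rest is invocation of the earlier results. So I will prove, by induction along a topological order $M_1\prec M_2\prec\dots$, that the envelope hypothesis \eqref{eq:envelopes} holds for each $M_k$ at every $t$.

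\textbf{Base case.} A source module has no predecessors, so $\mathrm{in}^{M}(t)=0$ and $\bar A^M=\bar a^M\ge a^M$ by the declared exogenous envelope.

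\textbf{Inductive step.} Fix $V$ and assume every predecessor $U\in\mathrm{Pred}(V)$ has already been shown to receive arrivals within its envelope. By the contract semantics of Section~\ref{sec:contract_semantics}, $z_{U\to V,i}(t)=r_{U\to V,i}s_i^U(t)$. The physical bound $s_i^U(t)\le q_i^U(t)+A_i^U(t)\le \overline q_i^U(t)+\bar A_i^U(t)$ uses the inductive hypothesis and validity of the state bound. Combined with $s_i^U(t)\le s_i^{U,\mathrm{target}}(t)$, this gives $z_{U\to V}(t)\le \bar z_{U\to V}(t)$ via \eqref{eq:zbar_cap}. Summing over predecessors and using \eqref{eq:compose_in} yields $\mathrm{in}^V(t)\le\bar{\mathrm{in}}^V(t)$, hence $A^V(t)\le\bar A^V(t)$. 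Acyclicity guarantees that each predecessor's envelope is fixed before $V$'s is assigned, so there is no circularity.

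The main obstacle is the removal bound $s_i^U\le \overline q_i^U+\bar A_i^U$. The model allows shedding $\delta$ and a service target that may exceed what is physically present. I will interpret $s_i^U(t)$ in \eqref{eq:z_from_s} as the realised removal, meaning the work actually departing, which cannot exceed $q_i^U(t)+A_i^U(t)$ because of reflection at zero in \eqref{eq:q_update_module}. I will also note that the realised service is at most the target; the tracking model only lower-bounds it, and $\kappa_i\le1$ is consistent with an upper bound. If the paper's intended $s$ can exceed content, I would instead route only $\min(s_i^U,q_i^U+A_i^U)$, which is what physically leaves the queue.

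\textbf{Safety.} With the envelope hypothesis verified at every module and every tick, each module enforces \eqref{eq:barrier_safety} with respect to a valid $\bar A^M$. Theorem~\ref{thm:op_safety} then gives $q^M(t)\in\mathcal{S}^M\Rightarrow q^M(t+1)\in\mathcal{S}^M$, and induction on $t$ from $q^M(0)\in\mathcal{S}^M$ gives invariance for all $t$.

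\textbf{Stability.} Each module satisfies \eqref{eq:dominance_thm}, so Theorem~\ref{thm:strong_stability} applies. Its boundedness requirements have to be checked. $\bar s_i^M$ is bounded by capacity plus shedding limits. The bound $\bar A^{M,\max}_i<\infty$ holds because $\bar{\mathrm{in}}^V\le\sum_U r\,\bar s^U$, which is finite by induction. The theorem then yields \eqref{eq:global_drift} with module-specific $\beta^M$ and $\varepsilon^M_{\min}$, and the averaging bound \eqref{eq:avg_backlog} follows as before.
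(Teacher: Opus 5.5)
Your proposal is correct and follows essentially the route the paper intends, though the paper states this theorem without a written proof. You validate each module's arrival envelope in topological order via $z_{U\to V}\le\bar z_{U\to V}$ and \eqref{eq:compose_in}, then apply Theorem~\ref{thm:op_safety} and Theorem~\ref{thm:strong_stability} module by module. One correction: $\kappa_i\le 1$ in \eqref{eq:service_tracking} does not imply $s_i^U(t)\le s_i^{U,\mathrm{target}}(t)$, since a work-conserving scheduler can exceed its target when other classes idle, so state this as an explicit assumption rather than derive it; the paper's own justification of \eqref{eq:zbar_cap} relies on it tacitly.
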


\paragraph{Operator phrasing.}
Equivalently, if every module runs $\mathcal{C}_{\Theta^M}^M$ and reports $\sigma^M(t)=\texttt{CERTIFIED}$ for all $t$, then safety and stability follow module-wise, and exported envelopes suffice for composition.
Figure~\ref{fig:forkjoin} instantiates the fork--join DAG in Section~\ref{sec:dag} and visualises the contract semantics behind \eqref{eq:compose_in}.
At the join node \(B\), the composed inflow envelope (dashed black) is formed by summing upstream exported envelopes, and it typically upper-bounds the realised inflow (median with p10--p90 band), even in a near-saturation regime (top panel).
We do not treat breaches as harmless noise. Upstream envelope violations reduce the valid-tick fraction, and downstream contracts may be used only on valid ticks. Consistent with this semantics, the breach rate at \(B\), conditioned on valid ticks, is near zero. The remaining breaches occur mainly on ticks already invalidated by upstream breach flags, as shown in the bottom panel. This is the intended fail-loud behaviour: envelope composition supports modular reasoning while its assumptions hold, and withdraws that guarantee when they fail.

\paragraph{Cyclic networks.}
\label{sec:cycles}

DAGs avoid circular dependencies in envelope assignment. Cyclic networks require an additional closure condition, since each module's assumptions may depend on guarantees exported by other modules.

\paragraph{Envelope closure problem.}
Stack all inflow envelopes in \(\bar{\bm{\mathrm{in}}}(t)\in\RR_{\ge 0}^d\), and all exported outflow envelopes in \(\bar{\bm z}(t)\in\RR_{\ge 0}^d\).

The interconnection mapping $G$ converts outflow envelopes to inflow envelopes:
\begin{equation}
\bar{\bm{\mathrm{in}}}(t) = G\big(\bar{\bm z}(t)\big),
\label{eq:G_map}
\end{equation}
which is typically linear and monotone (routing fractions and class mapping).

Each module produces export envelopes as a function of its assumed total envelope, which includes exogenous envelope $\bar{\bm a}(t)$ and inflow envelope:
\begin{equation}
\bar{\bm z}(t) = F\big(\bar{\bm a}(t) + \bar{\bm{\mathrm{in}}}(t)\big).
\label{eq:F_map}
\end{equation}
Closure is the fixed point
\begin{equation}
\bar{\bm{\mathrm{in}}}(t) = \Phi_t\big(\bar{\bm{\mathrm{in}}}(t)\big),
\qquad
\Phi_t(x) \triangleq G\!\left(F\big(\bar{\bm a}(t)+x\big)\right).
\label{eq:Phi}
\end{equation}

\begin{definition}[Monotone Lipschitz envelope maps]
A map $\Psi:\RR_{\ge 0}^d\to\RR_{\ge 0}^d$ is monotone if $x\le y$ implies $\Psi(x)\le \Psi(y)$ componentwise.
It is Lipschitz with constant $L$ in norm $\|\cdot\|$ if $\|\Psi(x)-\Psi(y)\|\le L\|x-y\|$ for all $x,y$.
\end{definition}

\begin{theorem}[Cyclic envelope closure under a small-gain condition]
\label{thm:closure}
Fix time $t$ and suppose: (i) $G$ is monotone and Lipschitz with constant $L_G$, (ii) $F$ is monotone and Lipschitz with constant $L_F$, and (iii) $L_G L_F < 1$.
Then $\Phi_t$ in \eqref{eq:Phi} is a contraction and has a unique fixed point $\bar{\bm{\mathrm{in}}}^\star(t)$.
Moreover, the iteration
\begin{equation}
\bar{\bm{\mathrm{in}}}^{(k+1)}(t) \leftarrow \Phi_t\big(\bar{\bm{\mathrm{in}}}^{(k)}(t)\big),\qquad k=0,1,2,\dots
\label{eq:closure_iter}
\end{equation}
converges to $\bar{\bm{\mathrm{in}}}^\star(t)$ from any initial $\bar{\bm{\mathrm{in}}}^{(0)}(t)\in\RR_{\ge 0}^d$ at a linear rate bounded by $L_G L_F$.
\end{theorem}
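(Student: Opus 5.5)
The plan is a direct application of the Banach fixed-point theorem on the closed cone \(\RR_{\ge 0}^d\), equipped with the norm \(\|\cdot\|\) in which the Lipschitz constants \(L_G\) and \(L_F\) are stated. Two preliminary points need checking. First, \(\Phi_t\) is well defined as a self-map of \(\RR_{\ge 0}^d\): for \(x\ge 0\), we have \(\bar{\bm a}(t)+x\ge 0\), and by \eqref{eq:F_map} and \eqref{eq:G_map} both \(F\) and \(G\) map nonnegative vectors to nonnegative vectors. Second, \(\RR_{\ge 0}^d\) is a closed subset of the finite-dimensional space \(\RR^d\), hence complete in any norm.

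The contraction estimate comes from composing the two Lipschitz bounds. For \(x,y\in\RR_{\ge 0}^d\),
\begin{equation*}
\|\Phi_t(x)-\Phi_t(y)\| \le L_G\,\|F(\bar{\bm a}(t)+x)-F(\bar{\bm a}(t)+y)\| \le L_G L_F\,\|x-y\|.
\end{equation*}
Here the translation by \(\bar{\bm a}(t)\) cancels inside the difference. The one subtlety is that the Lipschitz bound for \(F\) must apply in the same norm as that for \(G\), with \(F\) viewed as a map into the domain of \(G\). I would state this as part of reading the hypotheses; in the typical linear-routing case \(G(z)=Rz\), one takes \(L_G\) to be the induced operator norm. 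With \(L_GL_F<1\) by (iii), \(\Phi_t\) is a strict contraction.

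Banach's theorem then gives existence and uniqueness of \(\bar{\bm{\mathrm{in}}}^\star(t)\). For the rate, I would unroll the iteration \eqref{eq:closure_iter} directly: \(\|\bar{\bm{\mathrm{in}}}^{(k)}-\bar{\bm{\mathrm{in}}}^\star\|\le (L_GL_F)^k\|\bar{\bm{\mathrm{in}}}^{(0)}-\bar{\bm{\mathrm{in}}}^\star\|\). The a posteriori bound \((L_GL_F)^k(1-L_GL_F)^{-1}\|\bar{\bm{\mathrm{in}}}^{(1)}-\bar{\bm{\mathrm{in}}}^{(0)}\|\) follows in the standard way and gives a checkable stopping rule.

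Monotonicity is not needed for the contraction argument itself. I would add a remark on it: since \(\Phi_t\) is monotone, starting from \(\bar{\bm{\mathrm{in}}}^{(0)}=0\) yields a componentwise nondecreasing sequence that converges from below, while starting from any \(x\) with \(\Phi_t(x)\le x\) yields one that converges from above. This sandwiches the fixed point, which is useful operationally. I do not expect a real obstacle. The only step requiring care is the consistency of norms and domains in the composition step.
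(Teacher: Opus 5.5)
Your proposal is correct and is the intended route: the paper states Theorem~\ref{thm:closure} without a written proof, but its wording (contraction, unique fixed point, linear rate $L_GL_F$) is exactly your Banach fixed-point argument, which composes the two Lipschitz bounds on the complete set $\RR_{\ge 0}^d$. Your side remarks are also right: monotonicity is not needed for the contraction but does give bracketing iterates. One small correction: the bound $(L_GL_F)^k(1-L_GL_F)^{-1}\|\bar{\bm{\mathrm{in}}}^{(1)}-\bar{\bm{\mathrm{in}}}^{(0)}\|$ is the standard a priori estimate, not the a posteriori one.
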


\paragraph{Operational test.}
In practice, we use a conservative small-gain diagnostic based on a nonnegative gain matrix upper bound for $F$ combined with routing $G$. The spectral and norm-based checks, and the cycle-closure phase-transition plot, are in Section~\ref{app:cycles}.

% =========================

% -------------------------

\subsection{Cyclic closure diagnostics and conservative small-gain tests}
\label{app:cycles}
% -------------------------

In many packet models, $G$ is linear: $\bar{\bm{\mathrm{in}}}=R\bar{\bm z}$ with $R\ge 0$ encoding routing fractions and class mapping.
If $F$ can be upper bounded by an affine monotone map $\bar{\bm z}\le S(\bar{\bm a}+x)$ with $S\ge 0$, then $\Phi_t(x)\le RS(\bar{\bm a}+x)$ and a sufficient small-gain condition is
\begin{equation}
\rhoSpec(RS)<1,
\label{eq:spectral_smg}
\end{equation}
where $\rhoSpec(\cdot)$ is the spectral radius.

A simple instantiation is an affine monotone export rule where each component of $\bar{\bm z}$ is a nonnegative linear function of the assumed total envelope $\bar{\bm a}+x$:
\begin{equation}
F(\bar{\bm a}+x) \triangleq S(\bar{\bm a}+x),
\label{eq:F_linear}
\end{equation}
where $S\ge 0$ is a gain matrix.

In the induced $1$-norm, a conservative Lipschitz constant is
\begin{equation}
L_F \le \|S\|_1 = \max_{j}\sum_i S_{ij},
\label{eq:LF_bound}
\end{equation}
The bound is monotone because \(S\) is nonnegative.
When \(G(\bar{\bm z})=R\bar{\bm z}\) with \(R\ge 0\), we have \(L_G\le \|R\|_1\).
A sufficient condition for contraction is \(\|R\|_1\|S\|_1<1\). This test is conservative, but computationally inexpensive enough to be verified online.

\paragraph{Worked cyclic interpretation.}
Consider two modules with recirculating traffic. The routing matrix \(R\) is obtained from configured traffic fractions, for example the fraction of class \(i\) leaving module \(U\) and returning to module \(V\). The export-gain matrix \(S\) upper-bounds how much a certified outflow envelope can increase when the assumed inflow envelope is enlarged. At a control tick, the operator forms the conservative product \(RS\) and checks \(\rhoSpec(RS)\). If \(\rhoSpec(RS)<1\), the cyclic envelope closure is valid for that tick. If not, the operator can either enlarge envelopes conservatively, mark the closure as invalid, or fall back to a non-compositional local certificate.

Figure~\ref{fig:closure_phase} illustrates the closure test around \(\rhoSpec(RS)=1\). Below the boundary, the fixed-point iteration remains bounded and converges with a small residual. Near and above the boundary, the residual and iteration count increase, and the coupled queue-level view shows the corresponding rise in overload pressure and p99 delay.

\FloatBarrier

\begin{center}
\begin{minipage}{\linewidth}
  \centering
  \includegraphics[width=.63\linewidth]{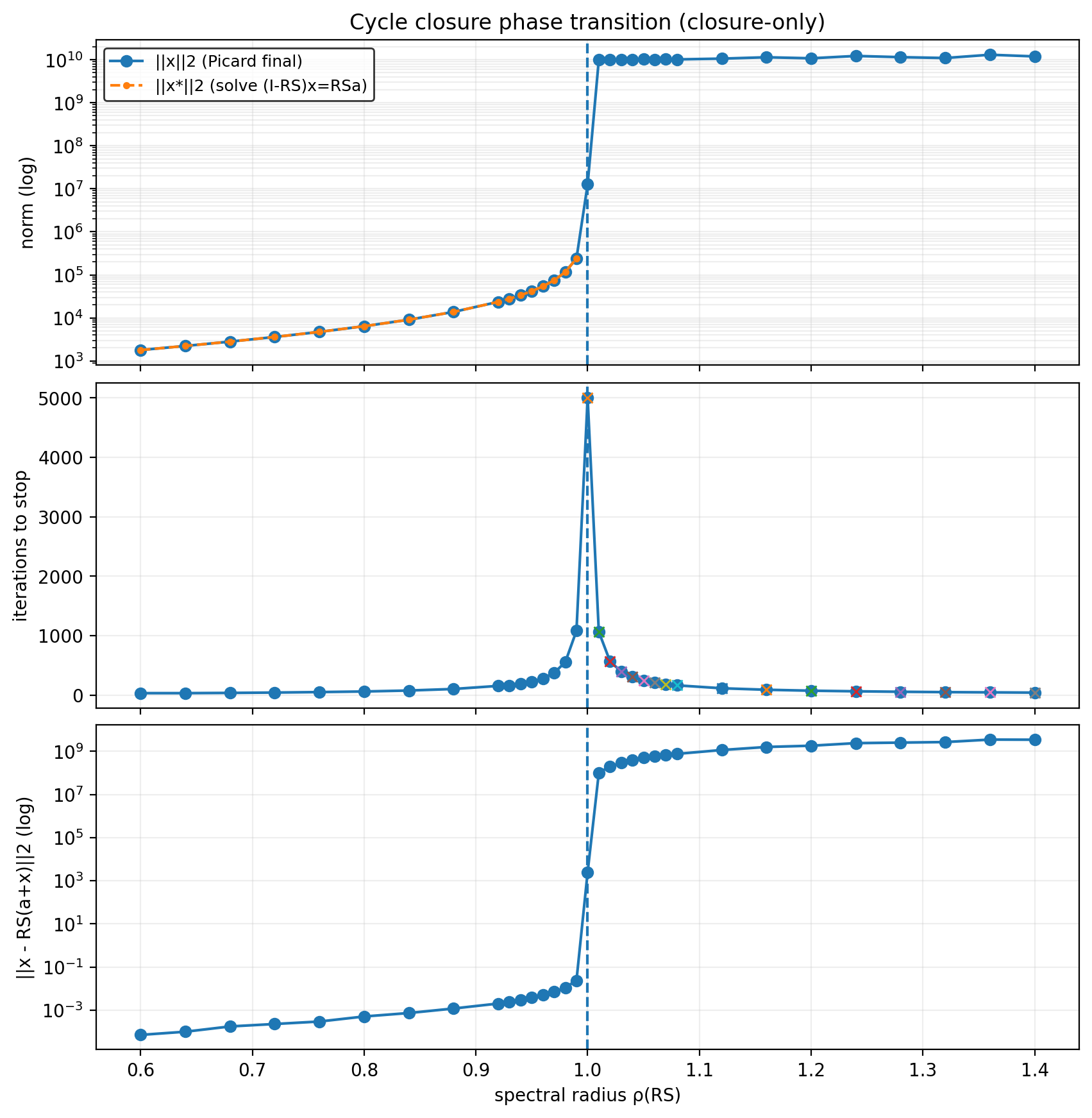}

  \vspace{0.8ex}
  {\small\textbf{(a)} Closure-only diagnostics versus spectral radius \(\rhoSpec(RS)\).}

  \vspace{0.8ex}
  \includegraphics[width=.7\linewidth]{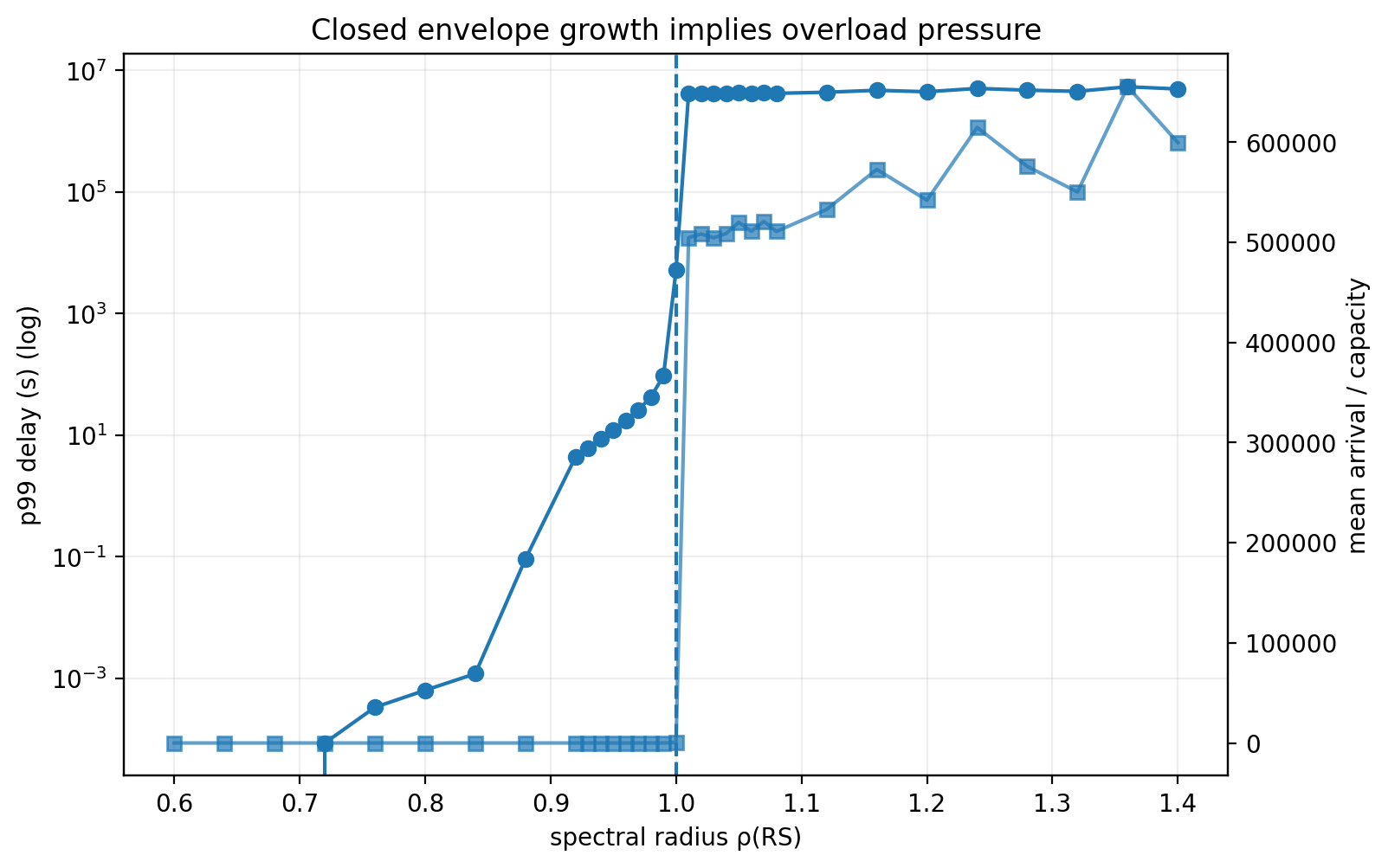}

  \vspace{0.8ex}
  {\small\textbf{(b)} Queue-level overload proxy under the same \(\rhoSpec(RS)\) sweep.}

  \captionof{figure}{%
  \textbf{Cycle closure phase transition around the small-gain boundary.}
  We sweep \(\rhoSpec(RS)\), with the dashed vertical line marking \(\rhoSpec(RS)=1\).
  \textbf{(a)} Closure-only diagnostics for Algorithm~\ref{alg:closure}: Picard iterate norm \(\|x\|_2\) (and \(\|x^\star\|_2\) when defined), iterations-to-stop, and residual \(\|x-RS(a+x)\|_2\).
  \textbf{(b)} Coupled queue-level view: growth in the closed-envelope regime aligns with increased overload pressure and sharply increasing p99 delay.%
  }
  \label{fig:closure_phase}
\end{minipage}
\end{center}

\FloatBarrier
% -------------------------

\section{Practical integration, adoption, and design trade-offs}
\label{sec:practical}
% =========================

This section maps the operator interface onto common packet-system deployments.
The operator sits between a proposer and the execution surface. It may run
locally at a host, gateway, or switch controller for latency-sensitive loops, or
hierarchically, where a central controller proposes actions and each device
certifies them against local constraints. In both cases, correctness attaches to
the action actually executed at the device.

\paragraph{Mapping actions to packet mechanisms.}

The model action \(u(t)=(\mu(t),\delta(t))\) is a canonical representation of two families of controls:
service allocation and shedding/mitigation.
Concrete mappings include:

\paragraph{Scheduling and WFQ/DRR weight control.}
For an egress scheduler with capacity $C(t)$, a common actuation surface is a set of class weights $w_i(t)$.
Given weights, a conservative per-tick service target can be derived as
\begin{equation}
\mu_i^{\mathrm{tar}}(t) = \frac{w_i(t)}{\sum_j w_j(t)}\,C(t),
\label{eq:weight_to_mu}
\end{equation}
and certification can be performed directly in $(\mu,\delta)$ space, then compiled back to weights (Section~\ref{app:adoptability}).
Service floors and caps \eqref{eq:floor}--\eqref{eq:cap_mitig} map naturally to minimum and maximum weight constraints.

\paragraph{Rate limiting and pacing.}
At hosts and gateways, actuation often takes the form of per-class rate caps or pacing rates.
Here $\mu_i^{\mathrm{tar}}(t)$ is translated into a rate target over the next interval, and $\delta_i(t)$ corresponds to policing or admission denial.

\paragraph{AQM and mitigation knobs.}
For AQM, the operator can treat $\delta_i(t)$ as a per-tick drop budget (or marking budget) for each class,
and can enforce mitigation caps to prevent overreaction under false positives or transient bursts.

In all cases, the operator certifies a platform-level target and accounts for
platform tracking through the conservative factor in Section~\ref{app:tracking}, so that guarantees attach to the realised removal used in the queue update.
\paragraph{What must be measured and configured.}
The operator requires a backlog-related signal, a capacity estimate, an arrival envelope, and action limits. Delayed counters are converted into backlog bounds as in Section~\ref{sec:env_bounds_recipe}. The arrival envelope is provided as a policy contract, calibrated from recent history, or set by the hybrid method described earlier. Envelope mismatch is surfaced through breach indicators
rather than hidden.

\paragraph{Certificate configuration.}
The certificate configuration $\Theta$ specifies backlog caps $Q^{\max}$, drift thresholds $Q^{\mathrm{hi}}$ and margins $\varepsilon$,
and any linear constraints such as floors/caps.
These are operational policy objects, not proposer internals.

\paragraph{Runtime path and overhead.}
On each tick, the runtime path is:

\begin{enumerate}
\item update envelopes $\bar A(t)$ and backlog bounds $[\underline q(t),\overline q(t)]$,
\item obtain the proposal $\tilde u(t)$,
\item compile per-tick constraints from $\Theta$ (including delay-consistent constraints when $\tau_u>0$),
\item compute the executed action by projection \eqref{eq:projection} (often via a small QP as in \eqref{eq:qp_obj}--\eqref{eq:qp_drift}),
\item export envelope bounds and emit a status flag with diagnostics.
\end{enumerate}

The projection problem is small, structured, and admits fast paths in common cases (Section~\ref{app:adoptability}).

\FloatBarrier

\begin{center}
\begin{minipage}{\linewidth}
  \centering
  \includegraphics[width=\textwidth]{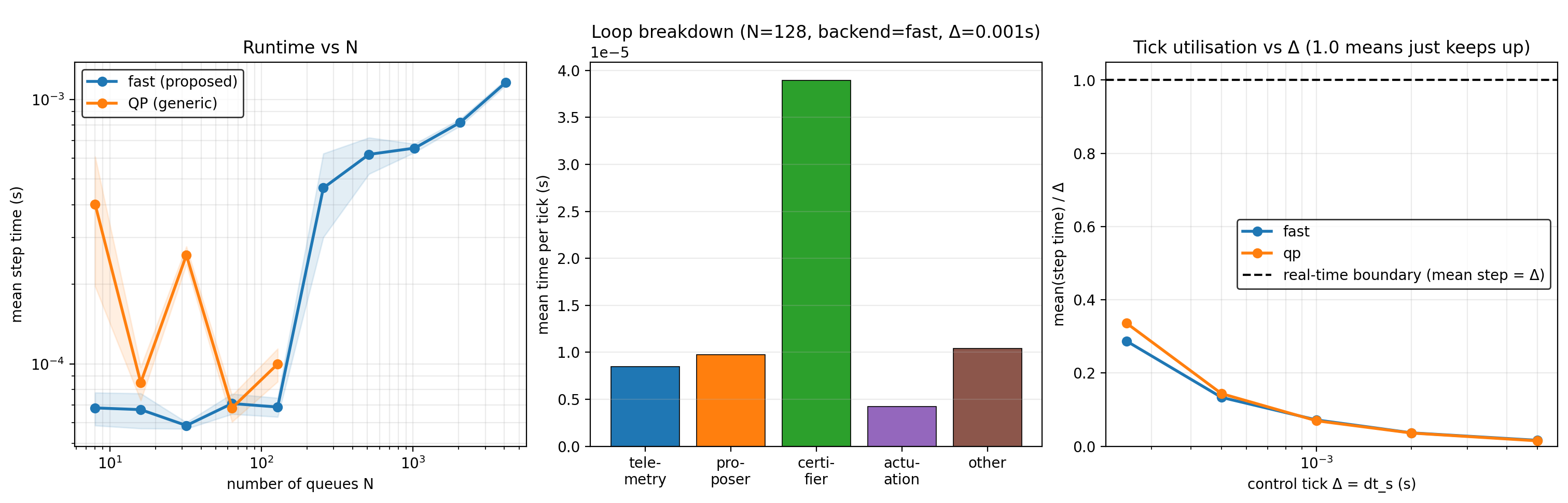}
  \captionof{figure}{\textbf{ Runtime and loop-latency headroom at pinned control interval.}
  \textbf{Left:} Mean certification (operator) step time versus number of queues \(N\), comparing the proposed fast backend against a generic QP baseline.
  \textbf{Middle:} Per-tick loop breakdown at \(N=128\) with fast backend and pinned \(\Delta=1\,\mathrm{ms}\).
  \textbf{Right:} Tick utilisation, across control intervals \(\Delta\).}
  \label{fig:runtime}
\end{minipage}
\end{center}

\FloatBarrier

\begin{table}[t]
\centering
\small
\caption{Per-tick certified-operator runtime at \(\Delta=1\) ms. The table reports certification time only; proposer computation is excluded.}
\label{tab:certifier_runtime}
\begin{tabular}{rccc}
\toprule
\(N\) & mean \(t_{\mathrm{cert}}\) (\(\mu\)s) & p95 \(t_{\mathrm{cert}}\) (\(\mu\)s) & mean/\(\Delta\) (\%) \\
\midrule
64  & 38.5  & 52.0  & 3.8  \\
128 & 37.2  & 50.0  & 3.7  \\
512 & 335.0 & 511.5 & 33.5 \\
\bottomrule
\end{tabular}
\end{table}

\paragraph{Runtime reporting scope.}
The numbers in Table~\ref{tab:certifier_runtime} report the certification step only; proposer execution, telemetry parsing, and platform actuation are excluded from the table. The implementation is the Python \texttt{certloop} package. The fast backend uses a NumPy implementation of projection onto capped simplex constraints with lower bounds, while the generic baseline uses an SLSQP-style constrained optimisation path when the fast structure is not available. The table is averaged over ten seeds in a steady workload at \(\Delta=1\,\mathrm{ms}\), with warm-up excluded. The original run metadata did not record CPU model, kernel, or host scheduling state, so these timings should be read as preliminary certification-only timings rather than a full end-to-end systems benchmark.
Figure~\ref{fig:runtime} reports pinned-tick microbenchmarks for the certification loop.
The left panel reports mean certified-operator step time against the number of queues \(N\). It compares the proposed fast backend with a generic QP baseline.
The fast path is lower throughout and scales smoothly in the regime relevant to per-class scheduling. The middle panel decomposes the per-tick loop at \(N=128\) and \(\Delta=1\,\mathrm{ms}\). The right panel reports tick utilisation, with \(\EE[\text{step time}]/\Delta\) summarised in Table~\ref{tab:certifier_runtime}. These results support the narrower claim that the certification computation itself is small enough for millisecond-scale control intervals under the measured Python backend.

% =========================
\subsection{Adoptability notes and fast certification paths}
\label{app:adoptability}
% -------------------------

The generic operator solves the convex projection \eqref{eq:projection}.
In the default case, which is also the one most likely to appear in practice, this projection has a closed-form structure.

Consider a single bottleneck with constraints \(\mu_i\ge 0\) and \(\sum_i \mu_i \le C\), together with per-class lower bounds on total removal \(s_i=\mu_i+\delta_i\ge \ell_i\).
When shedding is allowed but penalised, the operator usually sets \(\delta=0\) whenever this is feasible.
The executed service then reduces to
\begin{equation}
\mu \;=\; \Pi_{\Delta(C)}(\tilde\mu)\ \ \text{subject to}\ \ \mu_i \ge \ell_i,
\label{eq:fastpath_projection}
\end{equation}
where \(\Pi_{\Delta(C)}\) denotes projection onto the capped simplex.
This can be implemented in \(O(N\log N)\) time by sorting.
The QP backend is needed only when extra constraints, such as caps, floors, or multi-resource coupling, remove this simple structure.

To reduce tuning burden, we recommend three presets for \((Q^{\max},Q^{\mathrm{hi}},\varepsilon)\): latency-biased, throughput-biased, and default.
A simple calibration links backlog caps to delay targets through \(Q^{\max}\approx C d_{\max}\), using consistent units. Then \(Q^{\mathrm{hi}}\) is set as a fraction of \(Q^{\max}\), while \(\varepsilon\) controls recovery speed after bursts.

% -------------------------
\subsection{Positioning and design trade-offs}
\label{sec:discussion}
% =========================

The certified operator creates an explicit execution boundary. The cost is conservatism when envelopes are loose, state bounds are delayed, or \(\kappa_{\min}\) is small. With well-calibrated envelopes, the operator can be less wasteful than static provisioning because it reacts to the certified state. With stale telemetry or pessimistic contracts, it has less room to preserve the proposer’s preferred action. This is the desired behaviour: weaker evidence should produce more cautious execution.

\paragraph{Cost of certification.}
\label{sec:cost_cert}

\FloatBarrier

\begin{center}
\begin{minipage}{\linewidth}
  \centering
  \includegraphics[width=\linewidth]{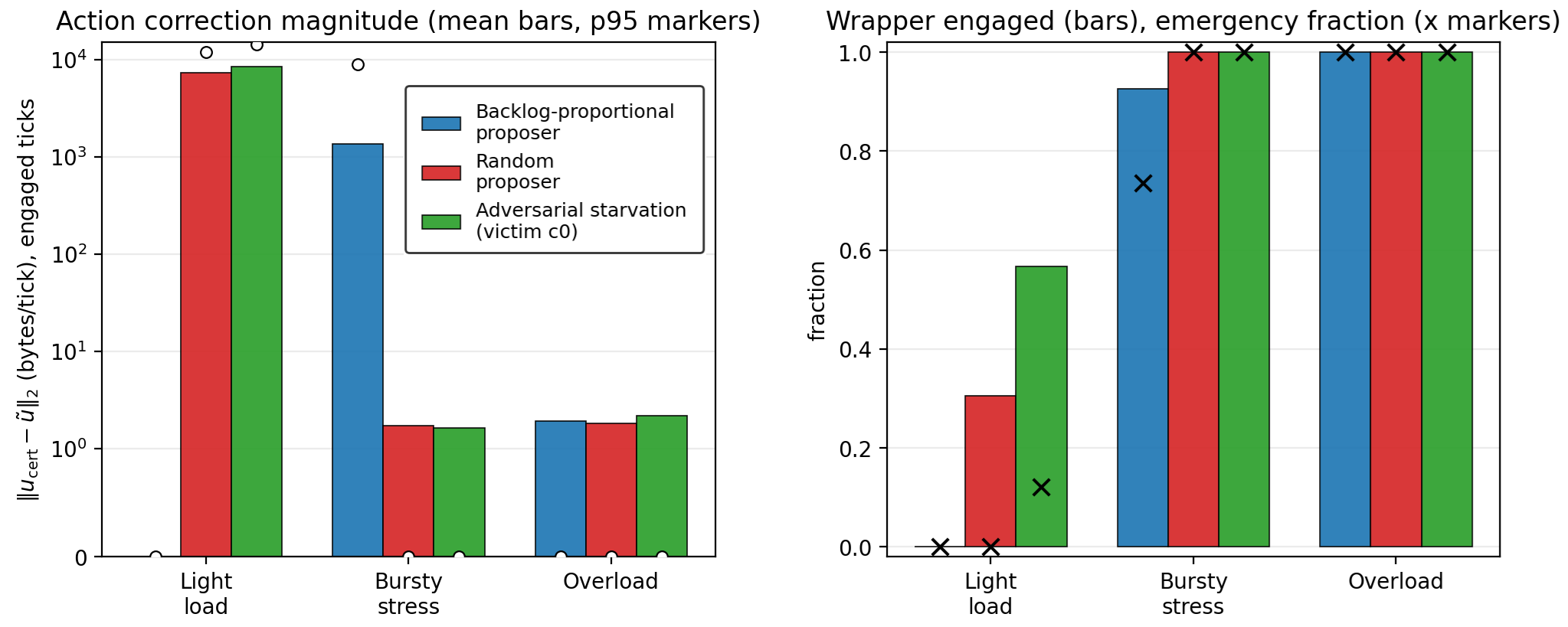}
\captionof{figure}{%
\textbf{Price of certification.}
Top: per-tick projection distance \(\|u(t)-\tilde u(t)\|_2\).
Bottom: fraction of ticks with certified correction and emergency-mode execution.%
}
  \label{fig:price_cert}
\end{minipage}
\end{center}

\FloatBarrier

Figure~\ref{fig:price_cert} reports the projection distance \(\|u(t)-\tilde u(t)\|_2\), separating feasible correction from emergency mode. Under light load, corrections remain small for well-behaved proposers. Under bursty stress, corrections grow as barrier and drift constraints become active. The emergency fraction identifies ticks where the declared envelopes and available capacity make certification infeasible.

% =========================
\section{Experiments}
\label{sec:experiments}
% =========================

The experiments examine safety under stress, correction under weak proposers,
composition across modules, and the effect of cyclic closure on feasibility and
conservatism.

We compare four controllers:
\begin{enumerate}
\item \textbf{Proposer only:} execute \(\tilde u\) directly.
\item \textbf{Monitor only:} execute \(\tilde u\) and raise alarms when constraints are violated.
\item \textbf{Naive clipping:} clip weights, floors, or caps using a heuristic rule, without solving the projection.
\item \textbf{Certified operator (ours):} execute the output of \(\mathcal{C}_\Theta\) and record the resulting status flags.
\end{enumerate}
The proposer may be backlog-proportional, random, or learned. The safety
guarantees apply only on ticks where the certified operator reports
\texttt{CERTIFIED}.

\paragraph{Python execution audit and scope of the backend.}
The present evaluation is deliberately Python based. The simulator is a byte-level closed-loop backend: the proposer emits \(\tilde\mu\), the certified operator emits \(\mu\), actuation delay is represented by a FIFO action buffer, and the realised removal is computed as the physical service that can be taken from backlog plus arrivals on that tick. Thus the experiments audit the execution boundary and the status semantics of the certified operator. They do not claim to measure the short-window tracking loss of Linux \texttt{tc}, or kernel packet scheduling. That loss is the role of the calibrated \(\kappa_{\min}\) parameter in Section~\ref{app:tracking}.

\begin{table}[t]
\centering
\small
\caption{Execution-audit fields used in experiments.}
\label{tab:python_execution_audit}
\begin{tabular}{p{0.20\linewidth}p{0.24\linewidth}p{0.28\linewidth}p{0.18\linewidth}}
\toprule
Trace field & Meaning in the implementation & Role in the paper claim & Limitation \\
\midrule
\(\tilde\mu(t)\), \(\tilde\delta(t)\) & proposer output before certification & shows the action that would have reached the dataplane without the operator & not a safe action by itself \\
\(\mu(t)\), \(\delta(t)\) & certified action returned by the operator & object to which safety, drift, and contract claims attach & still a target for a real scheduler \\
\(s_{\mathrm{real}}(t)\) & realised byte removal in the Python queue update & checks that the simulator executes the logged action subject to available backlog and arrivals & no Linux packetisation or NIC batching \\
\(\sigma(t)\), \(\xi(t)\) & status flag and slack vector & separates certified ticks from infeasible ticks and quantifies violation under overload & guarantees are not claimed when slack is positive \\
\(b(t)\) & envelope-breach indicator & marks ticks where the declared contract is not valid for composition & does not identify the external cause of breach \\
\bottomrule
\end{tabular}
\end{table}

The conservative reading is therefore: the experiments test whether the certified operator enforces the compiled constraints before execution, whether breaches and infeasibility are surfaced, and whether the closed loop remains well defined under delayed telemetry and actuation. Thus, the present evaluation validates the certified execution boundary in a byte-level closed-loop backend.

\paragraph{Stress traces under delayed telemetry and actuation.}
Figure~\ref{fig:killer_traces} evaluates delayed telemetry and actuation. Direct execution and naive clipping react late, producing earlier and more persistent threshold excursions. The certified operator reduces violations while constraints remain feasible by compiling barrier and drift constraints over the available state and envelope bounds. The status panel separates certified ticks from infeasible ticks, so overload is reported rather than hidden.

\FloatBarrier

\begin{center}
\begin{minipage}{\linewidth}
  \centering
  \includegraphics[width=.9\linewidth]{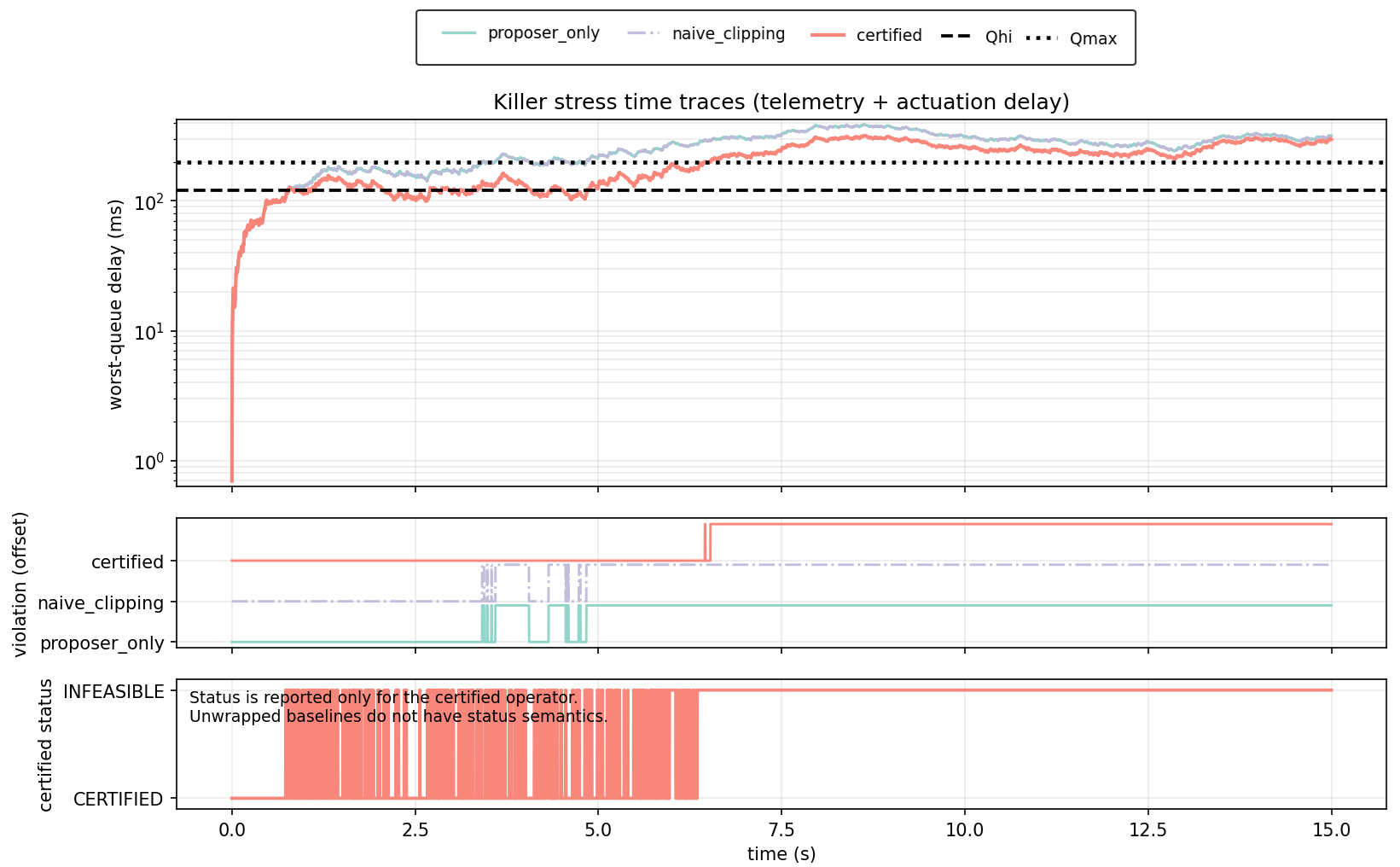}
  \captionof{figure}{%
  \textbf{Killer stress traces under telemetry and actuation delay.}
  }
  \label{fig:killer_traces}
\end{minipage}
\end{center}

\FloatBarrier

\FloatBarrier

\begin{center}
\begin{minipage}{\linewidth}
  \centering
  \includegraphics[width=.97\linewidth]{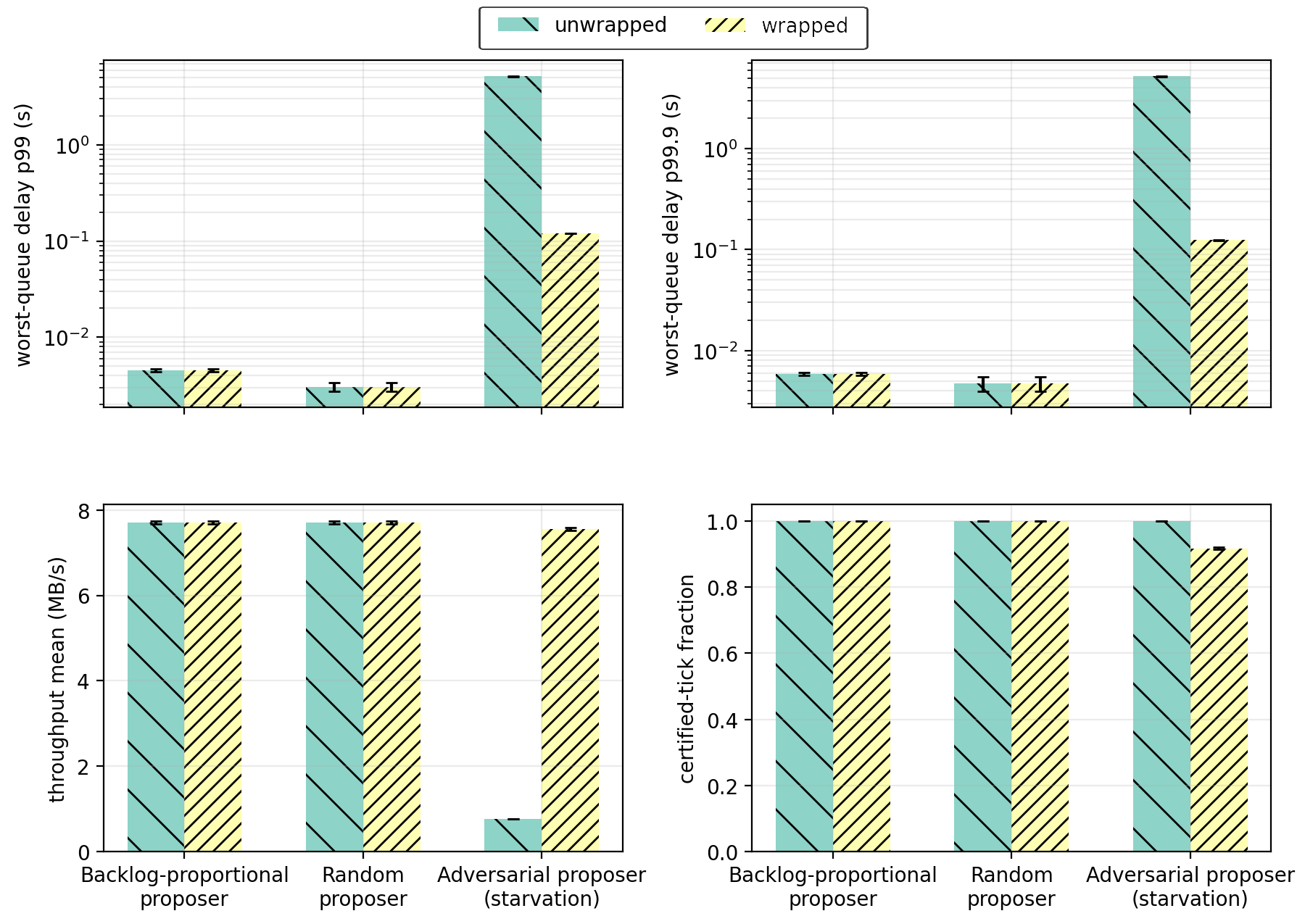}
  \captionof{figure}{%
  \textbf{Certified-operator value across proposer qualities, reported as mean \(\pm\) standard deviation over seeds.}
  Direct execution of proposals is compared with execution through the certified operator for three proposers: backlog-proportional, random, and adversarial starvation.
  Top row: worst-queue delay tails, p99 and p99.9, in seconds on a log scale.
  Bottom-left: mean throughput in MB/s.
  Bottom-right: fraction of ticks on which the wrapped operator reports \texttt{CERTIFIED}.%
  }
  \label{fig:wrapper_value}
\end{minipage}
\end{center}

\FloatBarrier

\paragraph{Certified-operator value across proposer qualities.}
Figure~\ref{fig:wrapper_value} measures the effect of placing the certified operator between a fixed proposer and the dataplane.
For benign proposers, namely backlog-proportional and random, wrapping has little effect on tail delay or throughput.
This is the expected result: their actions already lie close to the certified feasible set, so the operator mostly passes them through.
For the adversarial starvation proposer, direct execution produces much larger worst-queue delay tails and a sharp loss of delivered throughput.
Executing the certified output instead reduces p99 and p99.9 delay and restores throughput, since the executed action must satisfy the barrier and drift-derived lower bounds whenever these constraints are feasible.
The \texttt{CERTIFIED} fraction in the bottom-right panel gives the relevant semantics.
Guarantees apply only on feasible certified ticks, and the operator reports the remaining cases instead of preserving a fiction of correctness.
Across identical seeds, the certified operator consistently reduces delay tails and restores throughput under adversarial starvation, while staying nearly neutral under benign proposers, as shown in Appendix~\ref{app:supresults}, Figure~\ref{fig:wrapper_value_paired}.

\FloatBarrier

\begin{center}
\begin{minipage}{\linewidth}
  \centering
  \includegraphics[width=\linewidth]{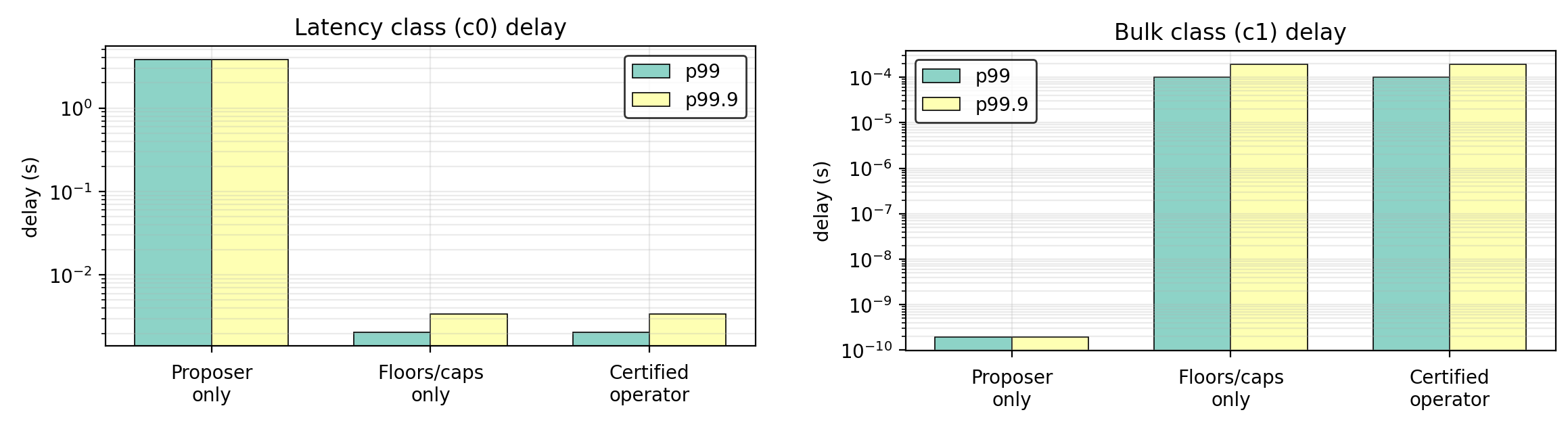}
  \captionof{figure}{%
  \textbf{Priority inversion under a poor proposer, and the effect of explicit floors and caps.}
  Tail delay for the latency class, c0, is shown at the top.
  Tail delay for the bulk class, c1, is shown at the bottom.
  Both are reported at p99 and p99.9 on a log scale.%
  }
  \label{fig:priority_inversion}
\end{minipage}
\end{center}

\FloatBarrier

\paragraph{Why floors and caps are first-class certificate constraints.}
Figure~\ref{fig:priority_inversion} shows a common deployment failure under a poor proposer: priority inversion.
When proposals are executed directly, the latency class c0 suffers much worse tail delay, while the bulk class c1 appears artificially healthy.
This is not a fine point of tuning.
It follows from leaving protection semantics buried inside proposer logic.
Explicit linear constraints, such as a per-tick service floor for protected classes \eqref{eq:floor} and optional caps for flagged classes \eqref{eq:cap_mitig}, prevent the inversion by reserving a minimum share of capacity for c0.
The certified operator enforces these constraints online, so the protection guarantee attaches to the executed action, not to the proposer’s good intentions.
Good intentions, in schedulers as elsewhere, are a poor substitute for a constraint.

\FloatBarrier

\begin{center}
\begin{minipage}{.85\linewidth}
  \centering
  \includegraphics[width=.85\linewidth]{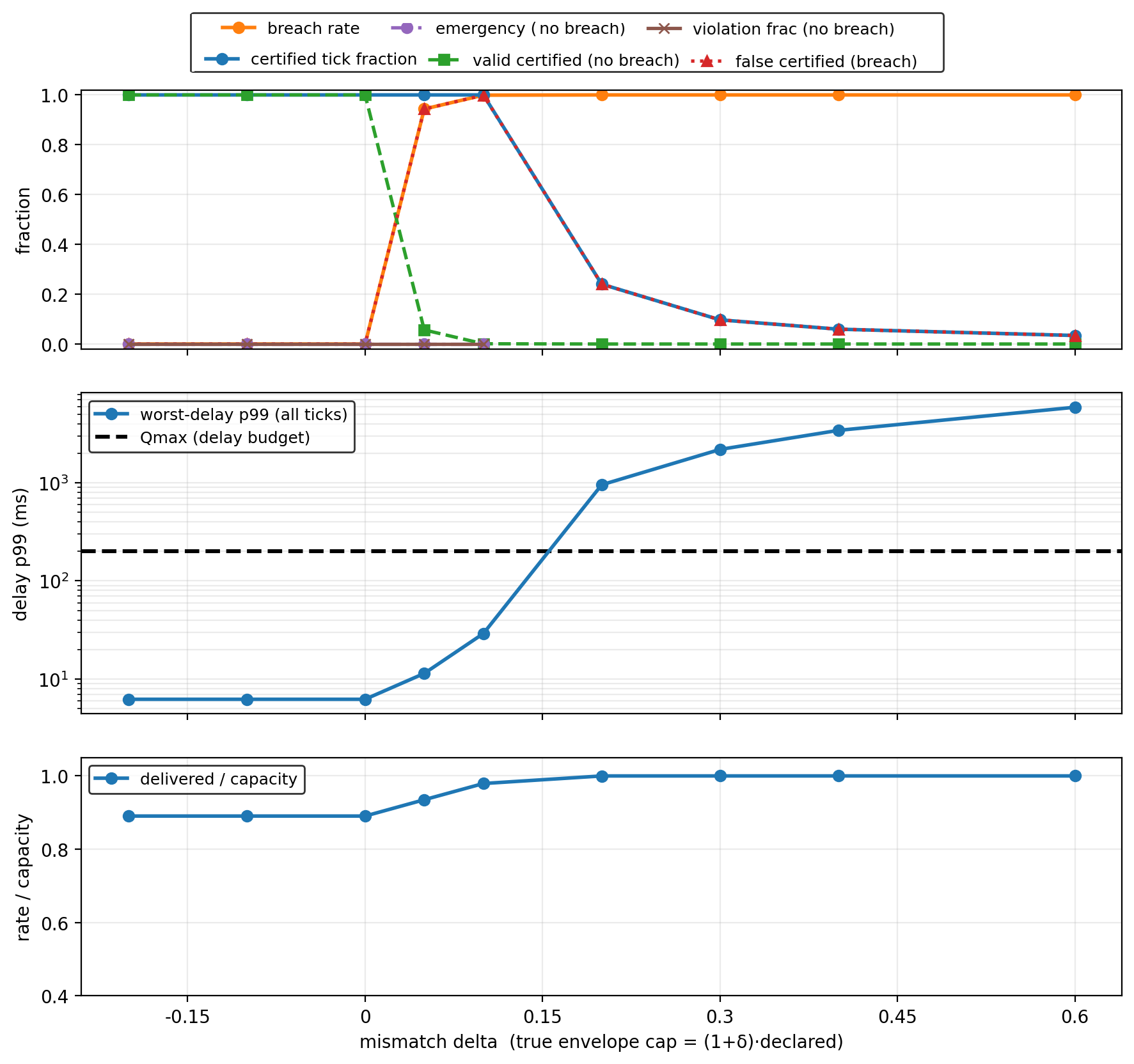}
  \captionof{figure}{%
  \textbf{Envelope mismatch sweep: breach visibility and status semantics.}
  Top: breach rate, defined as the fraction of ticks with \(b(t)=1\) in \eqref{eq:breach_indicator}, and certified-tick fraction.
  Middle: worst-queue delay p99 over all ticks, with delay budget \(Q^{\max}\).
  Bottom: delivered rate normalised by capacity.%
  }
  \label{fig:env_mismatch}
\end{minipage}
\end{center}

\FloatBarrier

\paragraph{Envelope mismatch is surfaced.}
Figure~\ref{fig:env_mismatch} sweeps the envelope mismatch parameter \(\delta\).
When \(\delta\le 0\), the declared envelope is pessimistic or correct. It becomes optimistic for \(\delta>0\).
For pessimistic or correct envelopes, breach is rare and most certified ticks are contract-valid. As the envelope becomes optimistic, the breach rate rises and the valid \texttt{CERTIFIED} fraction falls, because the contract assumption needed for safety and drift guarantees no longer holds. Some ticks may still satisfy the compiled constraints under the declared envelope, but they are not contract-valid and cannot support downstream composition. The middle panel shows the resulting delay growth.\FloatBarrier

\begin{center}
\begin{minipage}{.85\linewidth}
  \centering
  \includegraphics[width=.85\linewidth]{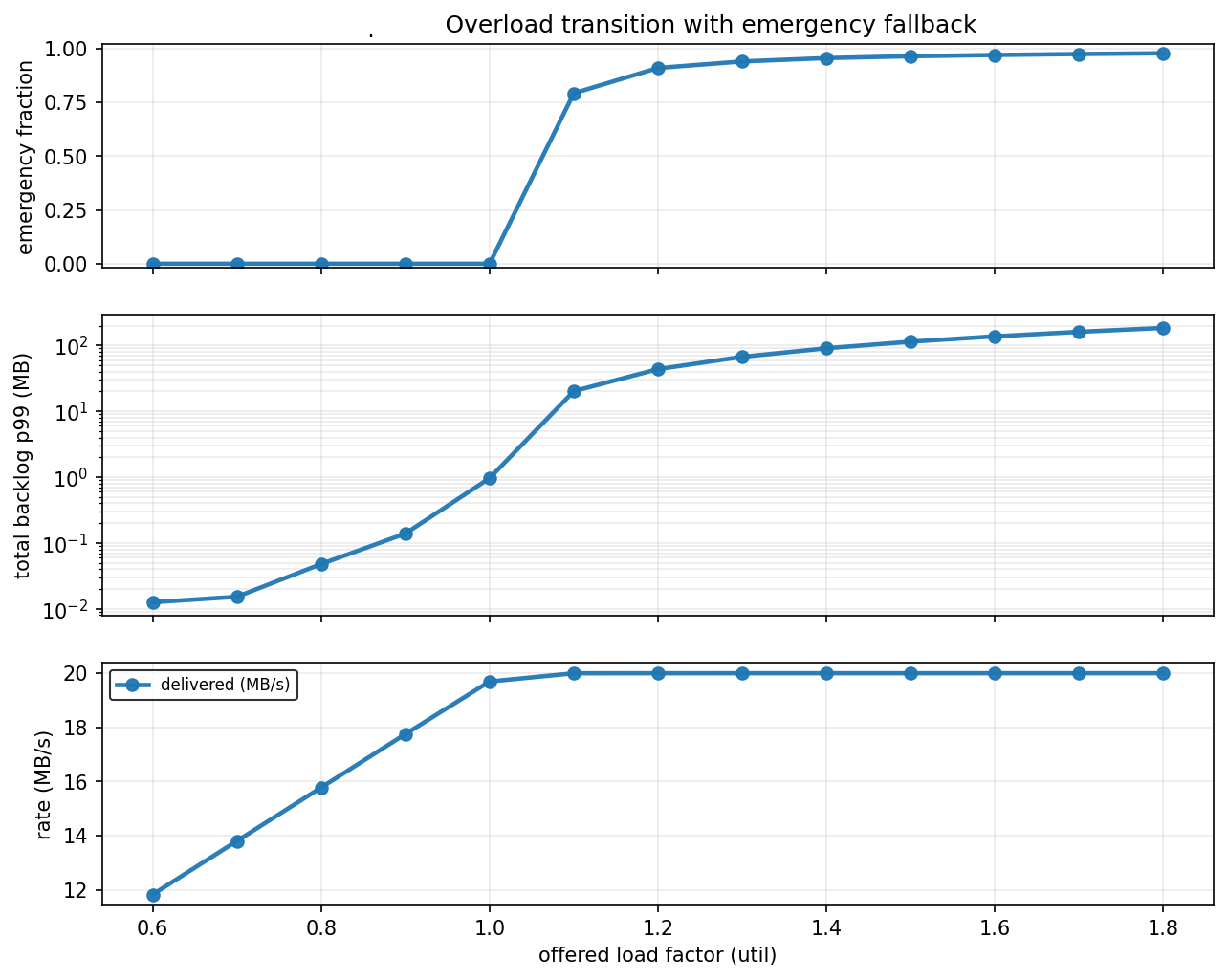}
  \captionof{figure}{%
  \textbf{Overload transition with emergency fallback.}
  Offered load is swept as a utilisation factor. The panels report emergency-mode fraction, total-backlog p99, and delivered rate.%
  }
  \label{fig:overload_emergency}
\end{minipage}
\end{center}

\FloatBarrier

\paragraph{Overload and emergency semantics.}
Infeasibility can occur even when envelopes are valid, for example when offered load exceeds capacity or configured caps are too tight. In this regime, the operator remains defined. It reports \texttt{INFEASIBLE} and returns a best-effort action with quantified slack, rather than silently violating the compiled constraints. Figure~\ref{fig:overload_emergency} shows the transition as offered load increases. Emergency-mode ticks rise sharply beyond the stability boundary, total-backlog p99 grows, and delivered rate saturates near capacity. This is the expected signature of overload, not a proposer artefact.
% =========================
\section{Conclusion}
% =========================

The central claim of this paper is that packet-network control should be certified at the point of execution. A proposer may be learned, heuristic, adversarial, or agentic, but the dataplane should act only on the certified operator's output. On \texttt{CERTIFIED} ticks, the guarantees attach to the executed action, the valid arrival envelope, the valid state bound, the service-tracking assumption, and the exported contract. When these conditions do not hold, the operator records the reason through breach, missing-information, infeasibility, and slack signals. An unsafe proposal should not become a safe action merely because the proposer was well designed.

The resulting framework brings backlog caps, service floors, mitigation caps, drift constraints, and envelope contracts into one per-tick certification interface. It also gives a compositional account: feed-forward networks compose through exported envelopes, while cyclic networks require a closure test under a small-gain condition. The evaluation supports this execution boundary under weak proposers, delayed telemetry, delayed actuation, envelope mismatch, overload, and millisecond-scale certification. These results should be read within that scope. They validate the certified operator in a byte-level closed-loop backend, while deployment-level scheduler tracking remains a separate systems question.

% The next step is to narrow that deployment gap. Scheduler tracking should be measured on real dataplanes, including Linux HTB, DRR, programmable switches, and hardware queues, so that \(\kappa_{\min}\) becomes a measured platform parameter rather than a modelling assumption. Finally, the trusted path from configuration to constraint compilation and solver output should be hardened through testing, translation validation, or verified compilation. These steps keep the paper's boundary intact: execute through a certified operator, state the assumptions under which the guarantee holds, and compose only over contracts that remain valid.

The framework also opens several natural directions for further work. One useful next direction is to instantiate the service-tracking factor on real dataplanes, including Linux HTB, DRR, programmable switches, and hardware queues. This would turn \(\kappa_{\min}\) from a conservative platform parameter into a measured quantity for specific scheduler regimes. Another direction is to strengthen the certification path itself, from configuration to constraint compilation and solver output, through systematic testing, translation validation, or verified compilation. 

% \subsection*{Author contributions: CRediT}
% Muhammad Bilal: Conceptualization, Methodology, Investigation, Visualization, Writing – original draft, Supervision. Jon Crowcroft: Investigation, Conceptualization, Validation, Writing – review \& editing. Xiaolong Xu: Investigation, Visualization, Validation, Writing – review \& editing. Huaming Wu: Conceptualization, Writing – review \& editing.  

\paragraph{\textbf{Ethical considerations}}
The study involved no human participants and used no sensitive or personally identifiable user data. No material ethical issues are therefore anticipated.

% \subsection*{Declaration of competing interest}
% The authors declare that they have no known competing financial interests or personal relationships that could have appeared to influence the work reported in this paper.

\subsection*{Declaration of generative AI and AI-assisted technologies in the manuscript preparation process}
During the preparation of this work the author(s) used AI-assisted tools for language editing and structural refinement. After using this tool/service, the author(s) reviewed and edited the content as needed and take(s) full responsibility for the content of the published article.

% =========================
% Bibliography
% =========================
\bibliographystyle{plain}
\bibliography{reference}

\appendix
\section{Appendices}
\subsection{Algorithms }
\label{app:algorithms}
% -------------------------

This appendix gives the auxiliary procedures used by the main text. Algorithm~\ref{alg:dag} gives the executable form of DAG envelope propagation used in Theorem~\ref{thm:dag_comp}: upstream modules export bounds, and downstream modules aggregate them in topological order. Algorithm~\ref{alg:closure} gives the cyclic counterpart, where the inflow envelope is found by fixed-point iteration and is valid only when the small-gain test succeeds. Algorithm~\ref{alg:stress} summarises the stress harness used in Section~\ref{sec:experiments}, including delayed telemetry, delayed actuation, timing-shaped arrivals, breach indicators, and status traces.

\begin{algorithm}[h]
\caption{Envelope propagation in a DAG, one forward pass}
\label{alg:dag}
\small
\raggedright
\begin{algorithmic}[1]
\Require DAG modules in topological order \(M_1\prec M_2\prec\cdots\prec M_L\), exogenous envelopes \(\bar a^{M_\ell}(t)\)
\State Set \(\bar{\bm{\mathrm{in}}}^{M_\ell}(t)\gets 0\) for all \(\ell\)
\For{\(\ell=1\) to \(L\)}
  \State Run the module step, Algorithm~\ref{alg:pcp}, to obtain exported envelopes \(\bar z_{M_\ell\to V}(t)\)
  \ForAll{outgoing neighbours \(V\) of \(M_\ell\)}
    \State \(\bar{\bm{\mathrm{in}}}^V(t)\gets \bar{\bm{\mathrm{in}}}^V(t)+\bar z_{M_\ell\to V}(t)\)
  \EndFor
\EndFor
\end{algorithmic}
\end{algorithm}

\begin{algorithm}[t]
\caption{Cyclic envelope closure by fixed-point iteration}
\label{alg:closure}
\small
\raggedright
\begin{algorithmic}[1]
\Require Exogenous envelope \(\bar{\bm a}(t)\), maps \(F\) and \(G\), tolerance \(\eta>0\), maximum iterations \(K_{\max}\)
\State Initialise \(\bar{\bm{\mathrm{in}}}^{(0)}(t)\gets 0\)
\For{\(k=0\) to \(K_{\max}-1\)}
  \State \(\bar{\bm z}^{(k)}(t)\gets F(\bar{\bm a}(t)+\bar{\bm{\mathrm{in}}}^{(k)}(t))\)
  \State \(\bar{\bm{\mathrm{in}}}^{(k+1)}(t)\gets G(\bar{\bm z}^{(k)}(t))\)
  \If{\(\|\bar{\bm{\mathrm{in}}}^{(k+1)}(t)-\bar{\bm{\mathrm{in}}}^{(k)}(t)\|\le \eta\)}
    \State \textbf{break}
  \EndIf
\EndFor
\State \textbf{Output}: \(\bar{\bm{\mathrm{in}}}(t)\gets \bar{\bm{\mathrm{in}}}^{(k+1)}(t)\) (closed envelopes)
\end{algorithmic}
\end{algorithm}

\begin{algorithm}[t]
\caption{Stress harness for delayed telemetry, delayed actuation, and timing-shaped arrivals}
\label{alg:stress}
\small
\raggedright
\begin{algorithmic}[1]
\Require Envelope sequence \(\bar A(t)\), telemetry delay \(\tau_y\), actuation delay \(\tau_u\), capacity \(C(t)\), horizon \(T\)
\State Initialise queues \(q(0)\) and an action buffer of length \(\tau_u\)
\For{\(t=0\) to \(T-1\)}
  \State arrivals \(A(t)\) within or beyond \(\bar A(t)\)
  \State Form telemetry \(y(t)=Hq(t-\tau_y)+\nu(t)\)
  \State Proposer outputs \(\tilde u(t)\) from \(y(\le t)\)
  \State Operator sets \((u(t),\sigma(t))\leftarrow\mathcal{C}_{\Theta}(\tilde u(t),y(t))\), while baselines set \(u(t)\leftarrow\tilde u(t)\)
  \State Apply delayed action \(u_{\mathrm{applied}}(t)=u(t-\tau_u)\)
  \State Update queues and log violations, breach flags, status flags, and runtime
\EndFor
\end{algorithmic}
\end{algorithm}

\subsection{Supportive Results}
\label{app:supresults}
% -------------------------

\FloatBarrier

\begin{center}
\begin{minipage}{\linewidth}
  \centering
  \includegraphics[width=\linewidth]{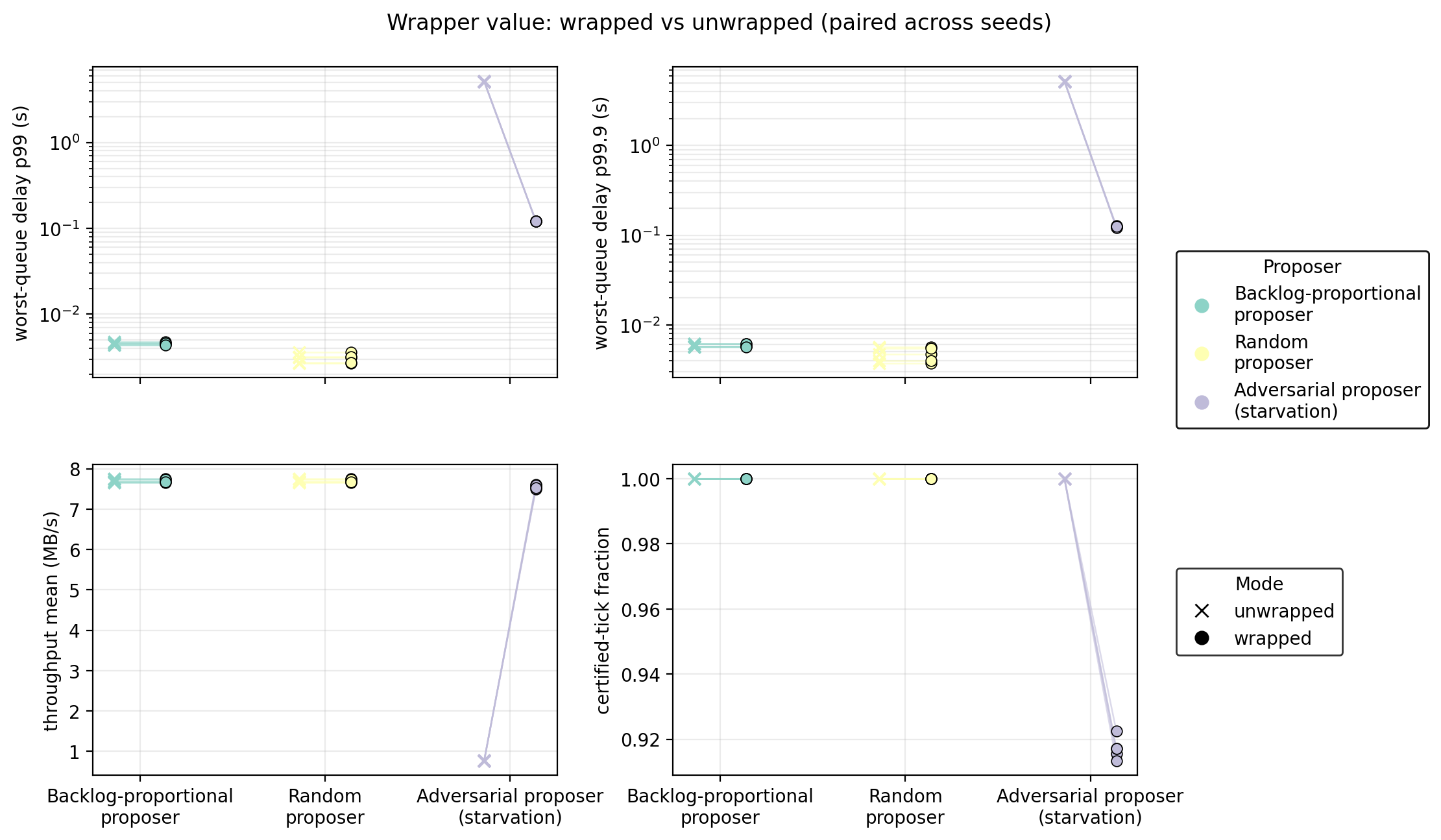}
  \captionof{figure}{%
  \textbf{Certified-operator value per seed, paired with and without certification.}
  Each proposer is evaluated on the same seeds in wrapped and unwrapped modes, with lines connecting paired runs.
  The paired view shows that improvements under the adversarial proposer are consistent across seeds for both delay tails and throughput.
  It also shows that the wrapped mode’s \texttt{CERTIFIED} fraction remains below one in this regime.%
  }
  \label{fig:wrapper_value_paired}
\end{minipage}
\end{center}

\FloatBarrier

\paragraph{Seed-paired evidence for wrapper value.}
Figure~\ref{fig:wrapper_value_paired} provides a seed-paired robustness check. Certified and direct-execution runs use identical random seeds. For backlog-proportional and random proposers, the paired points nearly coincide, indicating that certification is mostly non-invasive when proposals are already well behaved. Under the adversarial starvation proposer, certification reduces delay tails and restores throughput. Non-certified ticks in these runs are mainly \texttt{INFEASIBLE} ticks caused by tight compiled lower bounds under stress. Envelope breaches are analysed separately in the envelope-mismatch experiment.

\end{document}